\DeclareMathAlphabet\mathbfcal{OMS}{cmsy}{b}{n}
\newtheorem{theorem}{Theorem}
\newtheorem*{proof of Theorem*}{Proof of Theorem 3}
\newtheorem{proof of Lemma}{Proof of Lemma}
\newtheorem{definition}{Definition}
\title{The Optimization of Random Tree Codes for Limited Computational Resources}
\author{B. Tan Bacınoğlu\\
contact: barantan@metu.edu.tr
%Communication Networks Research Group\\
%Department of Electrical and Electronics Engineering\\
%METU, Ankara, Turkey \thanks{This work.}
}
\begin{document}

\maketitle
\date{}

\begin{abstract}
In this paper, we introduce an achievability bound on the frame error rate of random tree code ensembles under a sequential decoding algorithm with a hard computational limit and consider the optimization of the random tree code ensembles over their branching structures/profiles and the decoding measure. Through numerical examples, we show that the achievability bound for the optimizated random tree codes can approach the maximum likelihood (ML) decoding performance of pure random codes. 
\end{abstract}

\section{Introduction}
\label{sec:Introduction}
Random codes have been useful since their use for proving the achievability side of the original channel coding theorem \cite{6773024}. The study of random codes under computationally unconstrained decoding schemes, produced elegant measures for theoretical analysis, and provided achievability bounds as performance benchmarks for practical codes. Contrarily, the characterization of the performance is typically hard for practical codes due to the intractability of the corresponding decoding algorithms. %For example, the complex dynamics of graph-based iterative decoding algorithms like belief propagation lead to approximate analysis techniques like extrinsic information transfer (EXIT) charts.
As such, the performance of practical codes is often truly understood through empirical analysis.

Moreover, conventionally the design of practical codes does not incorporate computational constraints and presumes the success of a low complexity decoder while targetting low error rates. However, one can argue that measures involving the computational constraints on the decoding algorithm will be beneficial as design guidelines for practical codes. One such measure is the computational cutoff rate, or simply the cutoff rate, which estimates the maximum coding rate that the number of backtracing operations can be kept finite during the sequential decoding  \cite{Wozencraft1957SequentialDF} of a tree code \cite{Gallager1968InformationTA}. Polar codes, which are the first practical codes that provably achieve Shannon capacity \cite{5075875}, emerged as a result of a research program dedicated to the boosting of the cutoff rate \cite{DBLP:journals/corr/abs-1908-09594}. In \cite{DBLP:journals/corr/abs-1908-09594}, a practical coding scheme that combines polar codes with convolutional codes, i.e., polar-adjusted convolutional (PAC) codes,  where the sequential decoding of a convolutional code is improved with the channel polarization technique is introduced following the original idea. Notably, PAC codes use convolutional codes as irregular tree codes where the irregularity is due to the irregularity of the rate-profile that comes from the polarized channel.

The aforementioned observations motivates the study of random coding based achievability bounds involving the computational constraints on the decoding algorithm. If the achivability bound is parametrized with design choices, then practical codes can be obtanied using the samples from the random code ensemble that is optimized based on the achievability bound. A significant advantage of this approach is that the study and the design of practical codes can be done without an empirical analysis such as running Monte Carlo simulations. Variational quantum algorithms \cite{Cerezo2020VariationalQA}, such as the quantum approximate optimization (QAOA) algorithm \cite{Farhi2014AQA}, use a similar approach in which solutions to optimization problems are obtained using samples from quantum states prepared with parametrized quantum circuits that are optimized in order to minimize the expectation value of an observable that corresponds to the cost of solutions. For desinging practical codes, an achievability bound involving computational contraints can play a similar role to the expectation of the observable guiding the search for good practical codes. 

In this paper, we provide such an achievability bound for irregular random tree codes under a sequential decoding algorithm with a hard computational constraint. The decoding algorithm that we consider is a variation of the stack algorithm \cite{Zigangirov1975ProceduresOS, Jelinek1969FastSD}, that we refer to as stack-based sequential decoding with give up (SSDGU) algorithm where the decoding algorithm returns an error message in case the number of node checks/visits surpasses a preset computational limit. The memory capacity of the stack is assumed to be as large this computational limit. For a particular class of decoding (error cost) measures that we refer to as accumulating error cost (AEC) measures where the decoding error cost accumulates towards the descendant nodes, the SSDGU algorithm returns either an error message (due to the compuational limit) or the message with the minimum decoding error cost. Using this property, the achievability bound we provide can be  expressed as a sum of two terms: (1) a computatinal limit error (CFE) term and (2) a computational free error (CFE) term.  We consider the optimization of the achievability bound over the parameters that control the branching structure of the irregular random tree code. For this purpose, we suggest an optimization heuristic that we refer to as successive bit placement (SBP) algorithm. As the computational limit grows, the achievability bound optimized with the SBP algorithm attains the minimum value of the CFE part which coincides with the random coding union (RCU) bound \cite{5452208} and Gallager's bound for the relaxed versions of our achievability bound. However, as the computational constraint is finite, the optimization emphasizes computational efficiency. For this reason, we will refer to the optimized ensembles of random tree codes as computationally optimized random tree (CORT) codes. 

The studies in \cite{8630851} and \cite{10619485} are related in targetting approximate maximum likelihood (ML) decoding performance of pure random codes with the use of random codes under practical decoding schemes. In \cite{8630851}, an approximate ML decoding algorithm, i.e., guessing random additive noise decoding with abandonment (GRANDAB), where the decoder searches through an ordered list of noise patterns and abondons the search if a valid codeword is not  found after a fixed number of queries, is introduced. In \cite{10619485}, a class of codes, i.e., random staircase generator matrix codes (SGMCs) are introduces. Staircase-like generator matrices, where only a step descent at time is allowed, considered in \cite{10619485} is a special case  of the generator matrices considered in this paper, up to a change of index ordering\footnote{The random part of the generator matrices in \cite{10619485} is on the lower triangle side while the random part of the generator matrices is on the upper triangle side.}.       In fact, SGMCs correspond to random binary tree codes in our convention. Similarly, the authors  in \cite{10619485} address the computational benefits of SGMCs for approximating ML decoding error rate performance of pure random codes while presenting the representative ordered statistics decoding with local constraints (LC-ROSD) for the decoding of SGMCs.

The rest of the paper is organized as follows: In Section \ref{sec:Descriptions}, we provide the descriptions for the encoding and the decoding schemes that will be consider in the paper. In Section \ref{sec:Main_Results}, we will present our results. In Section \ref{sec:Numerical_Simulations}, we will provide the numerical results showing the evalution of the achievability bound for CORT codes. In Section \ref{sec:Conclusion}, we will make our concluding remarks.  
\section{Descriptions}
\label{sec:Descriptions}
\subsection{Encoding}
We will consider a parametrized ensemble of $(n,k)$-random linear block codes where the ensemble is parametrized through a sequence of integers $\lbrace a_{j} \rbrace_{j=1,2,\cdots ,k }$ with  $a_{j} \in \lbrace 1,2, \cdots , n \rbrace$ representing the arrival time of the message bit $j$. Let $\mathbf{x} \in \lbrace 0,1\rbrace^{n}$ and $\mathbf{m}\in \lbrace 0,1\rbrace^{k}$ denote the coded bits and message bits, respectively. Then, we have: 
\begin{equation}
\mathbf{x}=\mathbf{G}\mathbf{m},
\end{equation}
where $\mathbf{G}$ represents the generator matrix. 

For a randomly picked generator matrix $\mathbf{G}$, $\mathbf{G}_{ij} = 0$ if $i < a_{j}$ and $\mathbf{G}_{ij}$ is independently chosen to be $0$ or $1$ with equal probability if $i \geq a_{j}$. Without loss of generality, we will consider $\lbrace a_{j} \rbrace_{j=1,2,\cdots ,k }$ with non-decreasing order such that $a_{j} \leq a_{\bar{j}}$ for $j < \bar{j}$. For this case, we take $a_{1}=1$ as choosing $a_{1}>1$ fixes coded bits $\mathbf{x}_{i}$, for $i \leq a_{1}$, to zero.

Let $s(t): \lbrace 1, \cdots, n\rbrace \mapsto \lbrace  1, \cdots,k\rbrace$ be the non-decreasing function that corresponds to the number of bits with an arrival time not later than $t$, and can computed as:
\begin{equation}
s(t)= \displaystyle\sum_{j=1}^{k} \mathbb{I}_{\lbrace a_{j} \leq t\rbrace},
\end{equation}
where $\mathbb{I}_{\lbrace \cdot \rbrace}$ is the indicator function.

Note that as we consider $\lbrace a_{j} \rbrace_{j=1,2,\cdots ,k }$ with non-decreasing order, one can obtain $a_{j}$ from $s(t)$ using:
\begin{equation}
a_{j} = \min\lbrace t : s(t)= j\rbrace.
\end{equation}
Also observe that the considered linear block codes are tree codes as we have:
\begin{equation}
\mathbf{x}_{1:t}=\mathbf{G}[1:t;1:s(t)]\mathbf{m}_{1:s(t)},
\end{equation} 
where $\mathbf{G}[1:t;1:s(t)]$ is the submatrix of $\mathbf{G}$ limited to rows $1:t$ and columns $1:s(t)$.

Accordingly, $\mathbf{m}_{1:s(t)}$ as  the subvector of the message bits $\mathbf{m}$ which outputs $\mathbf{x}_{1:t}$, corresponds to one of the $2^{s(t)}$ possible paths for the first $t$ coded bits in the code tree. Let $\varepsilon_{t}: \lbrace 0,1\rbrace^{s(t)} \mapsto \lbrace 0,1\rbrace^{t}$ represent the mapping for the encoding of the first $t$ coded bits, i.e., $\varepsilon_{t}(\cdot)$ is the linear mapping with $\mathbf{G}[1:t;1:s(t)]$ as in below:
\begin{equation}
\varepsilon_{t}(\mathbf{m}_{1:s(t)})=\mathbf{G}[1:t;1:s(t)]\mathbf{m}_{1:s(t)}.
\end{equation} 
Particularly, we will be interested in subvectors of message bits $\mathbf{m}$ at \emph{branching times} which are defined as follows:
\begin{definition}
Given $s(\cdot)$, the time $t$ is said to be a branching time/stage if $s(t)>s(t-1)$ for $t>1$ and $s(t)>0$ for $t=1$.
\end{definition}
Note that as $s(t)>0$, $t=1$ is the first branching time. Let $h_{f}$ represent the last branching time before $n$. Let $b_{h}$ denote the $h$th branching time for $h \leq h_{f}$ and let $b_{h_{f}+1}=n$. Then $b_{h}$ can be expressed as:
\begin{equation}
b_{h}= \min\lbrace t: s(t)>s(b_{h-1})\rbrace,
\end{equation} 
for $1<h \leq h_{f}$.

Any binary sequence of lenght $s(b_{h})$ identifies a node at $h$th branching time/stage in the code tree. Accordingly, we will refer a node at  $h$th branching time using the corresponding binary vector in $\lbrace 0, 1\rbrace^{b_{h}}$. Let $\mathcal{C}(\mathbf{m}_{1:s(b_{h})})$ represent the set of subvectors that correspond to the children of the node $\mathbf{m}_{1:s(b_{h})}$, then it can be expressed as:
\begin{equation}
\mathcal{C}(\mathbf{m}_{1:s(b_{h})})= \lbrace \hat{\mathbf{m}}_{1:s(b_{h+1})}: \hat{\mathbf{m}}_{1:s(b_{h})}=\mathbf{m}_{1:s(b_{h})}\rbrace.
\end{equation} 
In other words, $\mathcal{C}(\mathbf{m}_{1:s(b_{h})})$ is the set of length $s(b_{h+1})$ subvectors that start with subvector $\mathbf{m}_{1:s(b_{h})}$. Let $c_{h}$ represent the number of children nodes of a node at branching time $h$. Let $\mathbf{m}_{0}$ and $c_{0}$ represent the root node and the number of the children nodes of the root node, respectively.

The random tree code considered in this section will be refered as a \emph{$(n,k)$-random tree code with the tree structure $s(t)$}.

\subsection{Decoding}
Let $\mathbf{y} \in \mathcal{Y}^{n}$ represent the output of a binary input noisy channel that takes the coded bits $\mathbf{x}$ as inputs where $\mathcal{Y}$ is the output alphabet of the channel. For the decoding of $\mathbf{m}$ from $\mathbf{y}$, we will consider a version of the sequential decoding with stack, i.e., stack algorithm. The decoding algorithm is associated with a measure $d(\cdot,\cdot): \lbrace 0,1\rbrace^{*}\times\mathcal{Y}^{*}\mapsto \mathbb{R}^{+}$, that represents the decoding error cost for a subvector of coded bits and the corresponding noisy channel outputs, e.g.,  $d(\mathbf{x}_{1:t},\mathbf{y}_{1:t})$, and a computational limit paramater $L$ that represents the maximum number of node checking steps allowed in the algorithm. To express the decoding error cost of a node $\mathbf{m}_{1:\ell}$, we will use the notation $d_{\mid \mathbf{y}}(\mathbf{m}_{1:\ell}):=d(\varepsilon_{r_{\ell}}(\mathbf{m}_{1:s(r_{\ell})}),\mathbf{y}_{1:r_{\ell}})$ where $r_{\ell}=\max\lbrace t \leq n : s(t)=\ell\rbrace$  for more compact expressions. 

We will consider a particular class of measures for $d(\cdot,\cdot)$ that we refer to as \emph{accumulating error cost} (AEC) measures with the following definition:
\begin{definition}
The measure $d(\cdot,\cdot)$ is said to be accumulating error cost (AEC) measure if it satisfies:
\begin{equation}
d(\mathbf{x}_{1:t},\mathbf{y}_{1:t}) \leq d(\mathbf{x}_{1:t'},\mathbf{y}_{1:t'})
\end{equation}
for all $\mathbf{x} \in \mathcal{X}^{n}$, $\mathbf{y} \in \mathcal{Y}^{n}$ and $t < t'$.
\end{definition}
\begin{algorithm}
\caption{The stack-based sequential decoding with give-up (SSDGU) algorithm:}
\begin{algorithmic}
\STATE{$\mathcal{S} \gets \mathcal{C}(\mathbf{m}_{0})$;} \COMMENT{Initialize the stack set $\mathcal{S}$ with the children of the root node.}
\STATE{$N_{c} \gets c_{0}$;} \COMMENT{Initialize the counter $N_{c}$ to the number of children nodes of the root node.}
\WHILE{$N_{c} \leq L$}
\STATE{$\mathbf{\hat{m}} \gets \arg\min_{\mathbf{m'} \in \mathcal{S}} d_{\mid \mathbf{y}}(\mathbf{m'})$;}
\STATE {$\mathcal{S}\gets \mathcal{S} / \lbrace \mathbf{\hat{m}}\rbrace$;} 
\IF{$\mathbf{\hat{m}}$ is a terminal node} \RETURN $\mathbf{\hat{m}}$; \ENDIF
\STATE{$\mathcal{S} \gets \mathcal{S}\cup \mathcal{C}(\mathbf{\hat{m}})$;}
\STATE{$N_{c} \gets N_{c}+ \vert\mathcal{C}(\mathbf{\hat{m}})\vert$;} 
\ENDWHILE
\STATE $\mathbf{\hat{m}} \gets \varepsilon $;
\RETURN $\mathbf{\hat{m}}$;
\end{algorithmic}
\end{algorithm}
As the decoding algorithm, we will consider a variation of the stack sequantial decoding algorithm that we will refer to as the stack-based sequential decoding with give-up (SSDGU) algorithm.
Let $\mathbf{\hat{m}} \in \lbrace 0,1\rbrace^{k} \cup \lbrace \varepsilon \rbrace$ denote the decoded message as the output of the decoding algorithm, where $\varepsilon$ represents the error message returned in case $L$ node checks are made without reaching a terminal node that correponds to a complete message vector with the minimum decoding error cost. Let $\mathcal{S}$ represent the set of node entries in the stack that can store $L$ node entries. We assume that the stack is kept ordered with respect to the decoding error costs of the node entries. Accordingly, the node with the minimum decoding error is the top entry of the stack while finding the correct order of the new entries can be done in $O(\log_{2}(L))$ time using binary search.

Note that the set of descendants of the set of nodes in the stack of the decoding algorithm always covers all the terminal nodes in tree of the code. Due to the definition of AEC measures, this guarantees that the minimum decoding error cost among all the terminal nodes is at least as large as the minimum decoding error cost among the nodes in the stack at any time during the execution of the decoding algorithm. Accordingly, if the decoding algorithm returns a message $\mathbf{\hat{m}}$ in $\lbrace 0,1\rbrace^{k}$, i.e., $\mathbf{\hat{m}} \neq \varepsilon$, then it is guaranteed that the corresponding terminal node has the lowest decoding error cost among all the terminal nodes. 

\section{Main Results}
\label{sec:Main_Results}
Our results will be based on the following result:
\begin{theorem}
\label{DcleDcfeachievability}
If the SSDGU algorithm with an AEC measure $d(\cdot,\cdot)$ and a computational limit paramater $L$ is used to decode an independent random message $\mathbf{m}$ that is uniformly selected from $\lbrace 0,1\rbrace^{k}$ from the output $\mathbf{y}$ of noisy channel that takes the coded bits $\mathbf{x}=\mathbf{G}\mathbf{m}$ with  $\mathbf{G}$ being the generator matrix  of a $(n,k)$-random tree code with the tree structure $s(t)$, then the probability of error for the decoded message $\mathbf{\hat{m}}$ satisfies:
\begin{equation}
\label{theboundforerrorDcleDcfe}
\Pr(\mathbf{\hat{m}}\neq \mathbf{m}) \leq D_{\text{CLE}}[s,d]+D_{\text{CFE}}[s,d],
\end{equation}
where $D_{\text{CLE}}[s,d]$ is the computation-limit-error (CLE) bound which is expressed as
\begin{eqnarray}
&&D_{\text{CLE}}[s,d] = \nonumber\\ 
&&\Pr\left( c_{0}+\displaystyle\sum_{h=1}^{h_{f}-1}c_{h}\displaystyle\sum_{\mathbf{m'}_{1:s(b_{h})}}\mathbb{I}_{\lbrace d_{\mid \mathbf{y}}(\mathbf{m'}_{1:s(b_{h})}) \leq d_{\mid \mathbf{y}}(\mathbf{m})\rbrace} \geq L\right) \nonumber
\end{eqnarray}  
, where the second summation runs over all $\mathbf{m'}_{1:s(b_{h})} \in \lbrace 0,1\rbrace^{s(b_{h})}$ and  $c_{h}=2^{s(b_{h+1})-s(b_{h})}$, $c_{0}=2^{s(b_{1})}$, and $D_{\text{CFE}}[s,d]$ is the computation-free-error (CFE) bound which is defined as
\begin{eqnarray}
&&D_{\text{CFE}}[s,d] = \nonumber\\
&&\Pr\left( \displaystyle\sum_{\mathbf{m'}}\mathbb{I}_{\lbrace d_{\mid \mathbf{y}}(\mathbf{m'}) \leq d_{\mid \mathbf{y}}(\mathbf{m})\rbrace} \geq 2\right),
\end{eqnarray}
where the summation runs over all $\mathbf{m'} \in \lbrace 0, 1\rbrace^{k}$.
\end{theorem}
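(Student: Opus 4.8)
The plan is to split the error event $\{\hat{\mathbf{m}}\neq\mathbf{m}\}$ into the two mutually exclusive ways in which the SSDGU algorithm can fail and to show that each is contained, \emph{for every realization} of the message $\mathbf{m}$, the generator matrix $\mathbf{G}$ and the channel noise, in the event whose probability appears on the right-hand side of \eqref{theboundforerrorDcleDcfe}; the claimed inequality then follows by the union bound. Write $\mathcal{E}_{\varepsilon}=\{\hat{\mathbf{m}}=\varepsilon\}$ for the give-up event and $\mathcal{E}_{\mathrm w}=\{\hat{\mathbf{m}}\in\{0,1\}^{k},\ \hat{\mathbf{m}}\neq\mathbf{m}\}$ for the wrong-codeword event, so that $\{\hat{\mathbf{m}}\neq\mathbf{m}\}=\mathcal{E}_{\varepsilon}\cup\mathcal{E}_{\mathrm w}$ as a disjoint union. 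The event $\mathcal{E}_{\mathrm w}$ is the easy part: by the property noted right after the description of the algorithm, whenever a complete message is returned it is a terminal node of minimum decoding error cost, so on $\mathcal{E}_{\mathrm w}$ one has $d_{\mid\mathbf{y}}(\hat{\mathbf{m}})\le d_{\mid\mathbf{y}}(\mathbf{m})$ while $\hat{\mathbf{m}}\neq\mathbf{m}$; hence $\hat{\mathbf{m}}$ and $\mathbf{m}$ are two distinct elements of $\{0,1\}^{k}$ with $d_{\mid\mathbf{y}}(\cdot)\le d_{\mid\mathbf{y}}(\mathbf{m})$, i.e.\ $\sum_{\mathbf{m}'}\mathbb{I}_{\{d_{\mid\mathbf{y}}(\mathbf{m}')\le d_{\mid\mathbf{y}}(\mathbf{m})\}}\ge 2$, which is exactly the $D_{\text{CFE}}$ event.

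For $\mathcal{E}_{\varepsilon}$ I would first isolate the mechanism that keeps every node the algorithm removes from the stack ``cheap''. The crucial ingredient is a coverage invariant: at initialization and after every iteration of the \texttt{while} loop, every terminal node of the code tree is a descendant of some node currently in $\mathcal{S}$. This holds at the start because $\mathcal{S}$ is the set of all children of the root, and it is preserved by a pop-and-expand step because deleting $\hat{\mathbf{m}}$ from $\mathcal{S}$ and inserting its children $\mathcal{C}(\hat{\mathbf{m}})$ does not change the union of the subtrees rooted at the nodes of $\mathcal{S}$. Applying the invariant to the transmitted terminal node $\mathbf{m}$: at every iteration there is a node $u\in\mathcal{S}$ that is an ancestor of, or equal to, $\mathbf{m}$, i.e.\ $u$ lies on the transmitted path, so the coded-bit prefix encoded at $u$ is a prefix of the transmitted codeword $\mathbf{x}$, and the AEC property gives $d_{\mid\mathbf{y}}(u)\le d_{\mid\mathbf{y}}(\mathbf{m})$. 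Since the node popped in that iteration minimizes $d_{\mid\mathbf{y}}$ over $\mathcal{S}$, its decoding error cost is at most $d_{\mid\mathbf{y}}(u)\le d_{\mid\mathbf{y}}(\mathbf{m})$.

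With this in hand the give-up case reduces to a counting argument. On $\mathcal{E}_{\varepsilon}$ no terminal node is ever popped (popping one would trigger a return), so every popped node is internal; moreover a node at the last branching stage $h_{f}$ has length $s(b_{h_{f}})=k$, hence is a complete message, hence terminal, so the popped --- and therefore expanded --- nodes all lie at stages $h\in\{1,\dots,h_{f}-1\}$. By the previous paragraph every expanded stage-$h$ node has $d_{\mid\mathbf{y}}(\cdot)\le d_{\mid\mathbf{y}}(\mathbf{m})$, so the number $n_{h}$ of distinct stage-$h$ nodes that get expanded satisfies $n_{h}\le\sum_{\mathbf{m}'_{1:s(b_{h})}}\mathbb{I}_{\{d_{\mid\mathbf{y}}(\mathbf{m}'_{1:s(b_{h})})\le d_{\mid\mathbf{y}}(\mathbf{m})\}}$. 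The counter $N_{c}$ is initialized to $c_{0}$ and increased by $\vert\mathcal{C}(\hat{\mathbf{m}})\vert=c_{h}$ each time a stage-$h$ node is expanded, so the value of $N_{c}$ when the loop exits equals $c_{0}+\sum_{h=1}^{h_{f}-1}c_{h}\,n_{h}\le c_{0}+\sum_{h=1}^{h_{f}-1}c_{h}\sum_{\mathbf{m}'_{1:s(b_{h})}}\mathbb{I}_{\{d_{\mid\mathbf{y}}(\mathbf{m}'_{1:s(b_{h})})\le d_{\mid\mathbf{y}}(\mathbf{m})\}}$; and on $\mathcal{E}_{\varepsilon}$ the loop exits only because this exit value of $N_{c}$ exceeds $L$, so the right-hand side above is $>L$, in particular $\ge L$. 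Thus $\mathcal{E}_{\varepsilon}$ is contained in the $D_{\text{CLE}}$ event, which together with the inclusion for $\mathcal{E}_{\mathrm w}$ and the union bound yields \eqref{theboundforerrorDcleDcfe}.

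The step I expect to require the most care is the bookkeeping in the give-up case: verifying that the coverage invariant is genuinely preserved by pop-and-expand, that expansions are confined to stages strictly below $h_{f}$ (which is why the summation index stops at $h_{f}-1$ rather than $h_{f}$), and that $c_{0}+\sum_{h}c_{h}\,n_{h}$ is exactly the loop-exit value of $N_{c}$. Everything else is routine: both set inclusions are pointwise in the underlying randomness, so no independence or symmetry property of the random tree code ensemble is needed at this stage, and passing from events to probabilities is merely monotonicity together with the union bound.
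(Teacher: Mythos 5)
Your proposal is correct and follows essentially the same route as the paper: the same disjoint decomposition into the give-up event and the wrong-codeword event, the same observation that an ancestor of $\mathbf{m}$ with cost at most $d_{\mid\mathbf{y}}(\mathbf{m})$ always remains in the stack (which yields the pointwise bound on $N_{c}$ and hence the $D_{\text{CLE}}$ inclusion), and the same two-terminal-nodes argument for the $D_{\text{CFE}}$ inclusion. Your write-up merely makes the coverage invariant and the counter bookkeeping more explicit than the paper does.
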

\begin{proof}
The probability of error that $\mathbf{\hat{m}}\neq \mathbf{m}$ can be decomposed into two terms as follows:
\begin{equation}
\label{errordecompose}
\Pr(\mathbf{\hat{m}}\neq \mathbf{m}) = \Pr(\mathbf{\hat{m}}=\varepsilon)+\Pr(\mathbf{\hat{m}}\neq \mathbf{m},\mathbf{\hat{m}} \neq  \varepsilon). 
\end{equation} 
Now, we will give individual upper bound for the RHS terms in \eqref{errordecompose}.

First, observe that any node at branching time $h+1$ is not checked if its parent node $\mathbf{m'}_{1:s(b_{h})}$ satisfies $d_{\mid \mathbf{y}}(\mathbf{m'}_{1:s(b_{h})})> d_{\mid \mathbf{y}}(\mathbf{m})$. The reason is that the stack of the decoding algorithm always contains an ancestor of $\mathbf{m}$ which has a decoding error cost at most as $d_{\mid \mathbf{y}}(\mathbf{m})$ (due to the definition of AEC measures) hence a node with a  decoding error cost lower than of $m'_{1:s(b_{h})}$ always exists. Therefore, the number of nodes that the decoding algorithm checks $N_{c}$ satisfies:
\begin{equation}
\label{upperboundforNc}
N_{c}\leq c_{0}+\displaystyle\sum_{h=1}^{h_{f}-1}c_{h}\displaystyle\sum_{\mathbf{m'}_{1:s(b_{h})}}\mathbb{I}_{\lbrace d_{\mid \mathbf{y}}(\mathbf{m'}_{1:s(b_{h})}) \leq d_{\mid \mathbf{y}}(\mathbf{m})\rbrace}, 
\end{equation}
with probability $1$. Accordingly, we have:
\begin{equation}
\label{Dclebound}
\Pr(\mathbf{\hat{m}}=\varepsilon) \leq D_{\text{CLE}}[s,d].
\end{equation}
Secondly, observe that if $\mathbf{\hat{m}} \neq \varepsilon$, a wrong message is decoded only if there exists at least two terminal nodes $\mathbf{m'}$ that satisfy $d_{\mid \mathbf{y}}(\mathbf{m'}) \leq d_{\mid \mathbf{y}}(\mathbf{m})$ as $\mathbf{m'}=\mathbf{m}$ always satisfies the condition. Accordingly, we have:
\begin{equation}
\label{Dcfebound}
\Pr(\mathbf{\hat{m}}\neq \mathbf{m},\mathbf{\hat{m}} \neq  \varepsilon) \leq D_{\text{CFE}}[s,d].
\end{equation}
Combining \eqref{errordecompose} with \eqref{Dclebound} and \eqref{Dcfebound} proves \eqref{theboundforerrorDcleDcfe}.
\end{proof}
Theorem \ref{DcleDcfeachievability} is an achievability result for a communication system using $(n,k)$-random tree codes with the tree structure $s(t)$ and the SSDGU algorithm with the AEC measure $d(\cdot,\cdot)$ and the computational limit paramater $L$. Let $D_{\text{E}}[s,d]$ represent the RHS in \eqref{theboundforerrorDcleDcfe}, i.e., $D_{\text{E}}[s,d] := D_{\text{CLE}}[s,d]+D_{\text{CFE}}[s,d]$. Then, as the LHS in \eqref{theboundforerrorDcleDcfe} is the sample average of the probability of decoding errors for $(n,k)$-random tree codes with the tree structure $s(t)$, \eqref{theboundforerrorDcleDcfe} guarantees the existence of an $(n,k)$-code that achieves a probability of decoding error  not higher than $D_{\text{E}}[s,d]$ if one uses the SSDGU with  the AEC measure $d(\cdot,\cdot)$ and the computational limit paramater $L$. Particularly, one can find an $(n,k)$-code with a probability of decoding error not higher than $\alpha D_{\text{E}}[s,d]$ for some $\alpha>1$, if sufficient number of   indepedent random tree codes can be tested \footnote{The probability that the probability of decoding error of a sample random tree code being larger than $\alpha D_{\text{E}}[s,d]$ is smaller $\frac{1}{\alpha}$ due to Markov inequality, hence the number of independent tests needed to find such a code is stochastically smaller than a geometrically distributed random variable with parameter $\frac{\alpha-1}{\alpha}$.}.

As \eqref{theboundforerrorDcleDcfe} provides better guarantees on the probability of decoding error with lower $D_{\text{E}}[s,d]$, it is useful to consider the optimization problem of minimizing $D_{\text{E}}[s,d]$:
\begin{equation}
\label{minDE}
\min_{d \in \mathcal{A}_{d,\text{AEC}}}\min_{s \in \mathcal{A}_{s}(n,k)} D_{\text{E}}[s,d],
\end{equation}
where $\mathcal{A}_{d,\text{AEC}}$ is the set of all AEC measures and $\mathcal{A}_{s}(n,k)$ is set of all non-decreasing functions with mapping $\lbrace 1, \cdots, n\rbrace \mapsto \lbrace  1, \cdots,k\rbrace$.     

Let  represent the solution of \eqref{minDE}. A problem that is dual to \eqref{minDE} is the problem of minimizing $L$ while satisfying $D_{\text{E}}[s,d] \leq \epsilon$ for some $\epsilon \in [0,1]$:
\begin{eqnarray}
\label{minL}
&&\min_{d \in \mathcal{A}_{d,\text{AEC}}}\min_{s \in \mathcal{A}_{s}(n,k)} L \\
&& \text{s.t. } D_{\text{E}}[s,d] \leq \epsilon.\nonumber
\end{eqnarray}
Let  $D_{\text{E}}^{*}(n,k,L)$ and $L_{D}(n,k,\epsilon)$ represent the solutions of \eqref{minDE} and \eqref{minL}, respectively. 

Solving \eqref{minDE} is analytically challenging and even if its search space is discretized \footnote{As $\mathcal{A}_{s}(n,k)$ is already discrete and finite, only   $\mathcal{A}_{d,\text{AEC}}$ requires discretization.}, an exhaustive search would require vast \footnote{At least the cardinality of $\mathcal{A}_{s}(n,k)$, i.e, $\vert\mathcal{A}_{s}(n,k)\vert$, which can be expressed as  $\binom{n+k-2}{k-1}$, is factorially large.} computational resources. On the other hand, relaxations of \eqref{minDE} together with heuristic optimization can be useful for finding good $[s,d]$ combinations.

We will consider the functionals that upper bound $D_{\text{E}}[s,d]$ with expressions that are more tractable. One example is the following functional:
\begin{equation}
D_{\text{E}}^{\text{(U)}}[s,d] =  D_{\text{CLE}}^{\text{(U)}}[s,d]+D_{\text{CFE}}^{\text{(U)}}[s,d],
\end{equation}
where
\begin{eqnarray}
\label{DCLEUdefinition}
&&D_{\text{CLE}}^{\text{(U)}}[s,d] = \nonumber\\
&& \mathbb{E} \left[ \left[ \displaystyle\sum_{h=0}^{h_{f}-1}v_{h}\Pr\left(d_{\mid \mathbf{y}}(\mathbf{\bar{m}}_{1:s(b_{h})}) \leq d_{\mid \mathbf{y}}(\mathbf{m})\mid \mathbf{x}, \mathbf{y}\right) \right]_{\leq 1}\right], \nonumber\\
\end{eqnarray}
where $v_{h}=\frac{1}{L}2^{s(b_{h+1})}$, $\mathbf{\bar{m}}_{1:s(b_{h})}$ is  uniformly selected from $\lbrace 0,1\rbrace^{s(b_{h})}$ being independent from other random variables, $d_{\mid \mathbf{y}}(\mathbf{\bar{m}}_{1:s(b_{0})})=0$ w.p.1, and
\begin{eqnarray}
\label{DCFEUdefinition}
D_{\text{CFE}}^{\text{(U)}}[s,d] = \mathbb{E}\left[ \left[ 2^{k}\Pr\left( d_{\mid \mathbf{y}}(\mathbf{\bar{m}}) \leq d_{\mid \mathbf{y}}(\mathbf{m})\mid \mathbf{x}, \mathbf{y}\right) -1\right]_{\leq 1}\right],
\end{eqnarray}
where $\mathbf{\bar{m}}$ is uniformly selected from $\lbrace 0,1\rbrace^{k}$ being independent from other random variables. Then, we have:
\begin{theorem}
The functional $D_{\text{E}}^{\text{(U)}}[s,d]$ is a uniform upper bound for $D_{\text{E}}[s,d]$ in $\mathcal{A}_{d,\text{AEC}}\times \mathcal{A}_{s}(n,k)$, i.e.,
\begin{equation}
\label{DEUbound}
D_{\text{E}}[s,d] \leq D_{\text{E}}^{\text{(U)}}[s,d],
\end{equation} 
for all $s \in \mathcal{A}_{s}(n,k)$ and $d \in \mathcal{A}_{d,\text{AEC}}$.
\end{theorem}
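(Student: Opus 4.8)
The plan is to bound $D_{\text{CLE}}[s,d]$ and $D_{\text{CFE}}[s,d]$ separately by $D_{\text{CLE}}^{\text{(U)}}[s,d]$ and $D_{\text{CFE}}^{\text{(U)}}[s,d]$ and then add the results, since $D_{\text{E}}=D_{\text{CLE}}+D_{\text{CFE}}$ and $D_{\text{E}}^{\text{(U)}}=D_{\text{CLE}}^{\text{(U)}}+D_{\text{CFE}}^{\text{(U)}}$. Both inequalities follow one recipe. First, condition on the pair $(\mathbf{x},\mathbf{y})$: because $s(n)=k$ and $\varepsilon_{n}(\mathbf{m})=\mathbf{x}$, the reference cost collapses to $d_{\mid\mathbf{y}}(\mathbf{m})=d(\mathbf{x},\mathbf{y})$, a constant under this conditioning, while the competing costs $d_{\mid\mathbf{y}}(\mathbf{m'})$ (resp. $d_{\mid\mathbf{y}}(\mathbf{m'}_{1:s(b_{h})})$) stay random through the generator matrix. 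Second, apply Markov's inequality together with the trivial bound $\Pr(\cdot)\leq 1$ to the conditional probability of the relevant overflow event; this is exactly what produces the truncation $[\cdot]_{\leq 1}$. Third, rewrite the resulting conditional expectation — a sum of $2^{(\cdot)}$ equiprobable indicator probabilities ranging over all candidate (partial) messages $\mathbf{m'}$ — as $2^{(\cdot)}$ times the probability attached to a single fresh message $\mathbf{\bar{m}}$ drawn uniformly and independently of $(\mathbf{G},\mathbf{m},\mathbf{y})$; this is nothing but linearity of expectation. Finally, take the outer expectation over $(\mathbf{x},\mathbf{y})$. Note that the argument uses nothing about $d$ beyond $d\geq 0$, so the bound in fact holds for any nonnegative $d$, not only AEC ones.

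For the CLE part, set $Z:=c_{0}+\sum_{h=1}^{h_{f}-1}c_{h}\sum_{\mathbf{m'}_{1:s(b_{h})}}\mathbb{I}_{\lbrace d_{\mid\mathbf{y}}(\mathbf{m'}_{1:s(b_{h})})\leq d_{\mid\mathbf{y}}(\mathbf{m})\rbrace}$, so that $D_{\text{CLE}}[s,d]=\Pr(Z\geq L)=\mathbb{E}\bigl[\,\Pr(Z\geq L\mid\mathbf{x},\mathbf{y})\,\bigr]$, and conditionally $\Pr(Z\geq L\mid\mathbf{x},\mathbf{y})\leq\bigl[\tfrac{1}{L}\mathbb{E}[Z\mid\mathbf{x},\mathbf{y}]\bigr]_{\leq 1}$. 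Writing $p_{h}:=\Pr(d_{\mid\mathbf{y}}(\mathbf{\bar{m}}_{1:s(b_{h})})\leq d_{\mid\mathbf{y}}(\mathbf{m})\mid\mathbf{x},\mathbf{y})$ with $\mathbf{\bar{m}}_{1:s(b_{h})}$ uniform and independent, linearity gives that the conditional expectation of the $h$-block equals $c_{h}\,2^{s(b_{h})}\,p_{h}$, and since $c_{h}=2^{s(b_{h+1})-s(b_{h})}$ this is $2^{s(b_{h+1})}p_{h}$; dividing by $L$ turns the coefficient into $v_{h}$. Because $c_{0}=2^{s(b_{1})}$ and $d\geq 0$ forces $p_{0}=\Pr(0\leq d(\mathbf{x},\mathbf{y}))=1$, the constant term $c_{0}/L$ is exactly the $h=0$ summand $v_{0}p_{0}$, so $\tfrac{1}{L}\mathbb{E}[Z\mid\mathbf{x},\mathbf{y}]=\sum_{h=0}^{h_{f}-1}v_{h}p_{h}$; taking the outer expectation reproduces \eqref{DCLEUdefinition}.

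For the CFE part, the count $\sum_{\mathbf{m'}\in\lbrace 0,1\rbrace^{k}}\mathbb{I}_{\lbrace d_{\mid\mathbf{y}}(\mathbf{m'})\leq d_{\mid\mathbf{y}}(\mathbf{m})\rbrace}$ is always at least $1$ since $\mathbf{m'}=\mathbf{m}$ always contributes $1$, so it is at least $2$ exactly when some $\mathbf{m'}\neq\mathbf{m}$ satisfies $d_{\mid\mathbf{y}}(\mathbf{m'})\leq d_{\mid\mathbf{y}}(\mathbf{m})$. Conditioned on $(\mathbf{x},\mathbf{y})$, the union bound and $\Pr(\cdot)\leq 1$ bound the conditional probability of this event by $\bigl[\sum_{\mathbf{m'}\neq\mathbf{m}}\Pr(d_{\mid\mathbf{y}}(\mathbf{m'})\leq d_{\mid\mathbf{y}}(\mathbf{m})\mid\mathbf{x},\mathbf{y})\bigr]_{\leq 1}$; removing the $(\mathbf{m'}=\mathbf{m})$-term, which equals $1$, from the full sum $\sum_{\mathbf{m'}\in\lbrace 0,1\rbrace^{k}}\Pr(d_{\mid\mathbf{y}}(\mathbf{m'})\leq d_{\mid\mathbf{y}}(\mathbf{m})\mid\mathbf{x},\mathbf{y})=2^{k}\Pr(d_{\mid\mathbf{y}}(\mathbf{\bar{m}})\leq d_{\mid\mathbf{y}}(\mathbf{m})\mid\mathbf{x},\mathbf{y})$ turns the bound into $\bigl[2^{k}\Pr(d_{\mid\mathbf{y}}(\mathbf{\bar{m}})\leq d_{\mid\mathbf{y}}(\mathbf{m})\mid\mathbf{x},\mathbf{y})-1\bigr]_{\leq 1}$, and the outer expectation reproduces \eqref{DCFEUdefinition}. (Incidentally this quantity is nonnegative because $\Pr(d_{\mid\mathbf{y}}(\mathbf{\bar{m}})\leq d_{\mid\mathbf{y}}(\mathbf{m})\mid\mathbf{x},\mathbf{y})\geq\Pr(\mathbf{\bar{m}}=\mathbf{m})=2^{-k}$.)

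The step that needs the most care is the interplay of conditioning with truncation. The truncation $[\cdot]_{\leq 1}$ must be applied to the conditional mean \emph{before} the outer expectation over $(\mathbf{x},\mathbf{y})$ is taken — it is precisely this interchange that makes $D_{\text{E}}^{\text{(U)}}$ more informative than a bare Markov/union bound — and one must verify both that the reference cost is genuinely frozen by conditioning on $(\mathbf{x},\mathbf{y})$ and that replacing the sum over $\mathbf{m'}$ by the single probability for $\mathbf{\bar{m}}$ is legitimate; the latter holds because $\mathbf{\bar{m}}$ (resp. $\mathbf{\bar{m}}_{1:s(b_{h})}$) is uniform and independent of the generator matrix, of the channel, and of $\mathbf{m}$, so conditioning on its value does not perturb the conditional law of the remaining randomness.
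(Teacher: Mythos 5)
Your proof is correct and follows essentially the same route as the paper's: split into the CLE and CFE parts, rewrite the sum over candidate (partial) messages as $2^{(\cdot)}$ times the probability attached to an independent uniform $\mathbf{\bar{m}}$, and bound the overflow probability by the truncated conditional mean given $(\mathbf{x},\mathbf{y})$. The only cosmetic difference is the order of operations: you apply Markov's inequality together with $\Pr(\cdot)\leq 1$ conditionally to produce $[\cdot]_{\leq 1}$ directly, whereas the paper first uses the unconditional bound $\Pr(X\geq 1)\leq\mathbb{E}\left[ [X]_{\leq 1}\right]$ and then Jensen's inequality (concavity of $\min(\cdot,1)$) to pull the conditional expectation inside the truncation --- the two yield the identical bound, and your side remarks (the AEC property is not actually needed here, and the CFE integrand is nonnegative) are both accurate.
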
 
\begin{proof}
First observe that:
\begin{eqnarray}
\label{mbarforCLE}
&&\displaystyle\sum_{\mathbf{m'}_{1:s(b_{h})}}\mathbb{I}_{\lbrace d_{\mid \mathbf{y}}(\mathbf{m'}_{1:s(b_{h})}) \leq d_{\mid \mathbf{y}}(\mathbf{m})\rbrace}\nonumber\\
&& =2^{s(b_{h})}\mathbb{E}_{\mathbf{\bar{m}}}\left[ \mathbb{I}_{\lbrace d_{\mid \mathbf{y}}(\mathbf{\bar{m}}_{1:s(b_{h})}) \leq d_{\mid \mathbf{y}}(\mathbf{m})\rbrace}\right], \text{w.p.1} 
\end{eqnarray}
where the expectation is taken over only on $\mathbf{\bar{m}}_{1:s(b_{h})}$.
Then, consider:
\begin{eqnarray}
\label{DCLEUbound}
&&D_{\text{CLE}}[s,d] \nonumber\\ 
&&=\Pr\left( c_{0}+\displaystyle\sum_{h=1}^{h_{f}-1}c_{h}\displaystyle\sum_{\mathbf{m'}_{1:s(b_{h})}}\mathbb{I}_{\lbrace d_{\mid \mathbf{y}}(\mathbf{m'}_{1:s(b_{h})}) \leq d_{\mid \mathbf{y}}(\mathbf{m})\rbrace} \geq L\right) \nonumber\\ 
&&=\Pr\left( \frac{c_{0}}{L}+\displaystyle\sum_{h=1}^{h_{f}-1}\frac{c_{h}}{L}\displaystyle\sum_{\mathbf{m'}_{1:s(b_{h})}}\mathbb{I}_{\lbrace d_{\mid \mathbf{y}}(\mathbf{m'}_{1:s(b_{h})}) \leq d_{\mid \mathbf{y}}(\mathbf{m})\rbrace} \geq 1\right) \nonumber\\
&&\leq\mathbb{E} \left[ \left[  \frac{c_{0}}{L}+\displaystyle\sum_{h=1}^{h_{f}-1}\frac{c_{h}}{L}\displaystyle\sum_{\mathbf{m'}_{1:s(b_{h})}}\mathbb{I}_{\lbrace d_{\mid \mathbf{y}}(\mathbf{m'}_{1:s(b_{h})}) \leq d_{\mid \mathbf{y}}(\mathbf{m})\rbrace}\right]_{\leq 1}\right]  \nonumber\\
&&=\mathbb{E} \left[ \left[ \displaystyle\sum_{h=0}^{h_{f}-1}\frac{c_{h}2^{s(b_{h})}}{L}\mathbb{E}_{\mathbf{\bar{m}}}\left[ \mathbb{I}_{\lbrace d_{\mid \mathbf{y}}(\mathbf{\bar{m}}_{1:s(b_{h})}) \leq d_{\mid \mathbf{y}}(\mathbf{m})\rbrace}\right]\right]_{\leq 1}\right]  \nonumber\\
&&=\mathbb{E} \left[ \left[ \displaystyle\sum_{h=0}^{h_{f}-1}\frac{2^{s(b_{h+1})}}{L}\mathbb{E}_{\mathbf{\bar{m}}}\left[ \mathbb{I}_{\lbrace d_{\mid \mathbf{y}}(\mathbf{\bar{m}}_{1:s(b_{h})}) \leq d_{\mid \mathbf{y}}(\mathbf{m})\rbrace}\right]\right]_{\leq 1}\right]  \nonumber\\
&&\leq\mathbb{E} \left[ \left[ \displaystyle\sum_{h=0}^{h_{f}-1}v_{h}\Pr\left(d_{\mid \mathbf{y}}(\mathbf{\bar{m}}_{1:s(b_{h})}) \leq d_{\mid \mathbf{y}}(\mathbf{m})\mid \mathbf{x}, \mathbf{y}\right) \right]_{\leq 1}\right]\nonumber\\
&&=D_{\text{CLE}}^{\text{(U)}}[s,d],  
\end{eqnarray}
where the first inequality is due to that $\Pr(X \geq 1)\leq \mathbb{E}[[X]_{\leq 1}]$ for a non-negative random variable $X$, the third equality is due to \eqref{mbarforCLE}, the fourth equality is due to $c_{h}=2^{s(b_{h+1})-s(b_{h})}$, and the second inequality is due to Jensen's inequality for the commutation of $[\cdot]_{\leq 1}$ and $\mathbb{E}[\cdot \mid \mathbf{x}, \mathbf{y}, \mathbf{\bar{m}}]$ operations. 

Similar to \eqref{mbarforCLE}, observe that:
\begin{eqnarray}
\label{mbarforCFE}
\displaystyle\sum_{\mathbf{m'}}\mathbb{I}_{\lbrace d_{\mid \mathbf{y}}(\mathbf{m'}) \leq d_{\mid \mathbf{y}}(\mathbf{m})\rbrace}=2^{k}\mathbb{E}_{\mathbf{\bar{m}}}\left[ \mathbb{I}_{\lbrace d_{\mid \mathbf{y}}(\mathbf{\bar{m}}) \leq d_{\mid \mathbf{y}}(\mathbf{m})\rbrace}\right], \text{w.p.1} \nonumber\\
\end{eqnarray}
where the expectation is taken over only on  $\mathbf{\bar{m}}$.
Then, consider:
\begin{eqnarray}
\label{DCFEUbound}
&&D_{\text{CFE}}[s,d] \nonumber\\
&&=\Pr\left( \displaystyle\sum_{\mathbf{m'}}\mathbb{I}_{\lbrace d_{\mid \mathbf{y}}(\mathbf{m'}) \leq d_{\mid \mathbf{y}}(\mathbf{m})\rbrace} \geq 2\right)\nonumber\\
&&=\Pr\left( \displaystyle\sum_{\mathbf{m'}}\mathbb{I}_{\lbrace d_{\mid \mathbf{y}}(\mathbf{m'}) \leq d_{\mid \mathbf{y}}(\mathbf{m})\rbrace}-1 \geq 1\right)\nonumber\\
&&=\mathbb{E}\left[ \left[ \displaystyle\sum_{\mathbf{m'}}\mathbb{I}_{\lbrace d_{\mid \mathbf{y}}(\mathbf{m'}) \leq d_{\mid \mathbf{y}}(\mathbf{m})\rbrace}-1\right]_{\leq 1}\right] \nonumber\\
&&=\mathbb{E}\left[ \left[ 2^{k}\mathbb{E}_{\mathbf{\bar{m}}}\left[ \mathbb{I}_{\lbrace d_{\mid \mathbf{y}}(\mathbf{\bar{m}}) \leq d_{\mid \mathbf{y}}(\mathbf{m})}\right]-1\right]_{\leq 1}\right] \nonumber\\
&&\leq\mathbb{E}\left[ \left[ 2^{k}\Pr\left( d_{\mid \mathbf{y}}(\mathbf{\bar{m}}) \leq d_{\mid \mathbf{y}}(\mathbf{m})\mid \mathbf{x}, \mathbf{y}\right) -1\right]_{\leq 1}\right] \nonumber\\
&&=D_{\text{CFE}}^{(U)}[s,d],
\end{eqnarray}
where the third equality is due to that $\Pr(X \geq 1)=\mathbb{E}[[X]_{\leq 1}]$ for $X \in \lbrace 0, 1, \cdots, 2^{k}\rbrace$, the fourth equality is due to \eqref{mbarforCFE}, and the inequality is due to Jensen's inequality for the commutation of $[\cdot]_{\leq 1}$ and $\mathbb{E}[\cdot \mid \mathbf{x}, \mathbf{y}, \mathbf{\bar{m}}]$ operations.

Combining \eqref{DCLEUbound} and \eqref{DCFEUbound} proves \eqref{DEUbound}.
\end{proof}
An upper bound on $D_{\text{CLE}}^{\text{(U)}}[s,d]$, hence a weaker upper bound on $D_{\text{CLE}}[s,d]$, is obtained if $[\cdot]_{\leq 1}$ operation is omitted in \eqref{DCLEUdefinition}:
\begin{eqnarray}
&&D_{\text{CLE}}^{\text{(M)}}[s,d] = \nonumber\\
&& \mathbb{E} \left[ \displaystyle\sum_{h=0}^{h_{f}-1}v_{h}\Pr\left(d_{\mid \mathbf{y}}(\mathbf{\bar{m}}_{1:s(b_{h})}) \leq d_{\mid \mathbf{y}}(\mathbf{m})\mid \mathbf{x}, \mathbf{y}\right) \right], \nonumber\\
\end{eqnarray}
is an upper bound on $D_{\text{CLE}}^{\text{(U)}}[s,d]$ as $[X]_{\leq 1} \leq X$ w.p.1 for any random real variable $X$. The bound that $D_{\text{CLE}}[s,d] \leq D_{\text{CLE}}^{\text{(M)}}[s,d]$ could be also obtained appyling Markov's inequality on $D_{\text{CLE}}^{\text{(M)}}[s,d]$ and this relation makes the bound more appealing as we have the following:
\begin{theorem}
\label{expectedNcboundDCLEML}
If $N_{c}$ is the number of nodes that the decoding algorithm checks for a random run of the SSDGU algorithm, then
\begin{equation}
\label{upperboundforexpectedNc}
\mathbb{E}[N_{c}] \leq  D_{\text{CLE}}^{\text{(M)}}[s,d]L,
\end{equation}  
for all $s \in \mathcal{A}_{s}(n,k)$ and $d \in \mathcal{A}_{d,\text{AEC}}$.
\end{theorem}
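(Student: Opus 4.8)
The plan is to start from the almost-sure upper bound on the number $N_{c}$ of checked nodes that was already derived in the proof of Theorem~\ref{DcleDcfeachivability} (namely \eqref{upperboundforNc}), take expectations of both sides, and then recognize the resulting expression as $L\,D_{\text{CLE}}^{\text{(M)}}[s,d]$ by reusing precisely the algebraic identities that appeared in the proof of the preceding theorem. The essential simplification relative to that proof is that expectation is monotone and additive, so the concave clipping $[\cdot]_{\leq 1}$ (and the Jensen step it forced) is no longer needed; this is exactly why the ``de-clipped'' functional $D_{\text{CLE}}^{\text{(M)}}$ is the natural object here.

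Concretely, the steps I would carry out are: (i) recall that $N_{c}\leq c_{0}+\sum_{h=1}^{h_{f}-1}c_{h}\sum_{\mathbf{m'}_{1:s(b_{h})}}\mathbb{I}_{\lbrace d_{\mid\mathbf{y}}(\mathbf{m'}_{1:s(b_{h})})\leq d_{\mid\mathbf{y}}(\mathbf{m})\rbrace}$ with probability $1$, where $h_{f}$, the $b_{h}$ and the $c_{h}$ are deterministic since the tree structure $s$ is fixed; (ii) take expectations and use linearity to get $\mathbb{E}[N_{c}]\leq c_{0}+\sum_{h=1}^{h_{f}-1}c_{h}\,\mathbb{E}\big[\sum_{\mathbf{m'}_{1:s(b_{h})}}\mathbb{I}_{\lbrace d_{\mid\mathbf{y}}(\mathbf{m'}_{1:s(b_{h})})\leq d_{\mid\mathbf{y}}(\mathbf{m})\rbrace}\big]$; (iii) invoke the identity \eqref{mbarforCLE} to replace each inner sum by $2^{s(b_{h})}\mathbb{E}_{\mathbf{\bar{m}}}[\mathbb{I}_{\lbrace d_{\mid\mathbf{y}}(\mathbf{\bar{m}}_{1:s(b_{h})})\leq d_{\mid\mathbf{y}}(\mathbf{m})\rbrace}]$, then use $c_{h}2^{s(b_{h})}=2^{s(b_{h+1})}$ (since $c_{h}=2^{s(b_{h+1})-s(b_{h})}$), together with $c_{0}=2^{s(b_{1})}$ and the convention $d_{\mid\mathbf{y}}(\mathbf{\bar{m}}_{1:s(b_{0})})=0$ w.p.\ $1$ — which, because $d(\cdot,\cdot)\geq 0$, makes the $h=0$ indicator equal $1$ almost surely — to absorb the $c_{0}$ term into an $h=0$ summand, giving $\mathbb{E}[N_{c}]\leq\sum_{h=0}^{h_{f}-1}2^{s(b_{h+1})}\mathbb{E}\big[\mathbb{E}_{\mathbf{\bar{m}}}[\mathbb{I}_{\lbrace d_{\mid\mathbf{y}}(\mathbf{\bar{m}}_{1:s(b_{h})})\leq d_{\mid\mathbf{y}}(\mathbf{m})\rbrace}]\big]$; (iv) since $\mathbf{\bar{m}}$ is independent of the remaining random variables, $\mathbb{E}_{\mathbf{\bar{m}}}[\cdot]$ is the conditional expectation given everything but $\mathbf{\bar{m}}$, so taking the outer expectation and applying the tower property with respect to the coarser $\sigma$-algebra generated by $(\mathbf{x},\mathbf{y})$ turns it into $\mathbb{E}\big[\Pr(d_{\mid\mathbf{y}}(\mathbf{\bar{m}}_{1:s(b_{h})})\leq d_{\mid\mathbf{y}}(\mathbf{m})\mid\mathbf{x},\mathbf{y})\big]$; (v) substitute $v_{h}=\tfrac{1}{L}2^{s(b_{h+1})}$, so the bound becomes $L\sum_{h=0}^{h_{f}-1}v_{h}\,\mathbb{E}\big[\Pr(d_{\mid\mathbf{y}}(\mathbf{\bar{m}}_{1:s(b_{h})})\leq d_{\mid\mathbf{y}}(\mathbf{m})\mid\mathbf{x},\mathbf{y})\big]=L\,D_{\text{CLE}}^{\text{(M)}}[s,d]$, which is \eqref{upperboundforexpectedNc}.

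I do not expect a substantive obstacle: the argument is a direct consequence of \eqref{upperboundforNc}, the identity \eqref{mbarforCLE}, and the tower property. The only bookkeeping that needs care is (a) matching the boundary term $c_{0}$ with the $h=0$ summand of $D_{\text{CLE}}^{\text{(M)}}$ through the conventions $s(b_{0})=0$ and $d_{\mid\mathbf{y}}(\mathbf{\bar{m}}_{1:s(b_{0})})=0$, and (b) the clean passage from the $\mathbf{\bar{m}}$-average to the conditional probability given $(\mathbf{x},\mathbf{y})$. The conceptual takeaway worth stating is that $D_{\text{CLE}}^{\text{(M)}}$, obtained from $D_{\text{CLE}}^{\text{(U)}}$ by dropping the clipping, is precisely the quantity whose product with $L$ governs the expected decoding complexity.
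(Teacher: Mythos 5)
Your proposal is correct and follows essentially the same route as the paper, whose proof is the one-line observation that taking expectations of both sides of \eqref{upperboundforNc} yields \eqref{upperboundforexpectedNc}; your steps (ii)--(v) simply make explicit the bookkeeping (the identity \eqref{mbarforCLE}, the relation $c_{h}2^{s(b_{h})}=2^{s(b_{h+1})}$, the $h=0$ convention, and the tower property) that the paper leaves implicit by reference to the preceding theorem's proof. No gaps.
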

\begin{proof}
Applying expectation for both sides on \eqref{upperboundforNc} gives \eqref{upperboundforexpectedNc}.
\end{proof}
Theorem \ref{expectedNcboundDCLEML} shows that the expected number of node checks, i.e., $\mathbb{E}[N_{c}]$, is a small fraction of $L$ given that $D_{\text{CLE}}^{\text{(M)}}[s,d]$ is small. Hence, $D_{\text{CLE}}^{\text{(M)}}[s,d]$ is relevant considering the average run time of the SSDGU algorithm as it is proportial to $\mathbb{E}[N_{c}]$.

Another advantage of the relaxation of $D_{\text{CLE}}^{\text{(U)}}[s,d]$ to $D_{\text{CLE}}^{\text{(M)}}[s,d]$ is that the latter can be expressed in the following form which is both analytically and computationally more convenient: 
\begin{eqnarray}
\label{DCLEMconvenient}
D_{\text{CLE}}^{\text{(M)}}[s,d] =  \displaystyle\sum_{h=0}^{h_{f}-1}v_{h} \Pr\left(d_{\mid \mathbf{y}}(\mathbf{\bar{m}}_{1:s(b_{h})}) \leq d_{\mid \mathbf{y}}(\mathbf{m})\right).
\end{eqnarray} 
While $D_{\text{CLE}}[s,d]$ and its relaxations, i.e., $D_{\text{CLE}}^{\text{(U)}}[s,d]$ and $D_{\text{CLE}}^{\text{(M)}}[s,d]$, depend on computational constraints through the computational limit $L$, $D_{\text{CFE}}[s,d]$ and its relaxation $D_{\text{CFE}}^{\text{(U)}}[s,d]$ are free from computational constraints and characterize the probability of decoding error without computational constraints. In fact, the latter can be related to known random coding achievability bounds as in the following:
\begin{theorem}
\label{RCUconnectionforDCFEU}
If $\mathcal{Y}$ is discrete, $s(1)=k$ and
\begin{equation}
\label{loglikelihoodford}
d(\mathbf{x'}_{1:t},\mathbf{y'}_{1:t})= -\log_{2}\left( P_{\mathbf{y}_{1:t} \mid \mathbf{x}_{1:t}}\left( \mathbf{y'}_{1:t} \mid \mathbf{x'}_{1:t}\right) \right), 
\end{equation}
then
\begin{equation}
\label{RCUcaseforDCFEU}
D_{\text{CFE}}^{\text{(U)}}[s,d] = \mathbb{E}\left[ \left[ \left( 2^{k}-1\right)\Pr\left( i(\mathbf{\bar{x}},\mathbf{y}) \geq i(\mathbf{x},\mathbf{y})\mid \mathbf{x}, \mathbf{y}\right)\right]_{\leq 1}\right],
\end{equation}
where $i(\mathbf{x'},\mathbf{y'})=\log_{2}\left( \frac{P_{\mathbf{y} \mid \mathbf{x}}\left( \mathbf{y'} \mid \mathbf{x'}\right) }{P_{\mathbf{y}}\left( \mathbf{y'}\right) }\right)$ is the information density and $\mathbf{\bar{x}}$ is uniformly selected from $\lbrace 0,1\rbrace^{n}$ being independent from other random variables.
\end{theorem}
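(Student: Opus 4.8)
The plan is to evaluate $D_{\text{CFE}}^{\text{(U)}}[s,d]$ from its definition \eqref{DCFEUdefinition} in closed form under the three hypotheses and to recognise the answer as \eqref{RCUcaseforDCFEU}. First I would collect the reductions the hypotheses provide. Since $s(1)=k$ and the $a_j$ are non-decreasing with $a_1=1$, we get $a_j=1$ for every $j$, so the generator matrix $\mathbf{G}$ has \emph{all} of its entries i.i.d.\ uniform on $\{0,1\}$. Also $r_k=\max\{t\le n:s(t)=k\}=n$ (because $s(n)=k$), hence for every $\mathbf{m}'\in\{0,1\}^k$ one has $d_{\mid\mathbf{y}}(\mathbf{m}')=d(\varepsilon_n(\mathbf{m}'),\mathbf{y})=d(\mathbf{G}\mathbf{m}',\mathbf{y})$, and in particular $d_{\mid\mathbf{y}}(\mathbf{m})=d(\mathbf{x},\mathbf{y})$. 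Plugging the log-likelihood choice \eqref{loglikelihoodford} into these and using $P_{\mathbf{y}}(\mathbf{y})>0$ for the realised output, the event $d_{\mid\mathbf{y}}(\mathbf{m}')\le d_{\mid\mathbf{y}}(\mathbf{m})$ becomes $P_{\mathbf{y}\mid\mathbf{x}}(\mathbf{y}\mid\mathbf{G}\mathbf{m}')\ge P_{\mathbf{y}\mid\mathbf{x}}(\mathbf{y}\mid\mathbf{x})$, and dividing both sides by $P_{\mathbf{y}}(\mathbf{y})$ and taking $\log_2$ turns this into $i(\mathbf{G}\mathbf{m}',\mathbf{y})\ge i(\mathbf{x},\mathbf{y})$.

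Next I would split the uniform draw of $\mathbf{\bar{m}}$ by whether it coincides with the transmitted message. Conditionally on $(\mathbf{x},\mathbf{y})$, the event $\mathbf{\bar{m}}=\mathbf{m}$ occurs with probability $2^{-k}$, and then $d_{\mid\mathbf{y}}(\mathbf{\bar{m}})=d_{\mid\mathbf{y}}(\mathbf{m})$ holds with equality. On the complementary event the claim is that $\mathbf{\bar{x}}:=\mathbf{G}\mathbf{\bar{m}}$ is uniform on $\{0,1\}^n$ and independent of $(\mathbf{x},\mathbf{y})$: because $\mathbf{G}$ has i.i.d.\ uniform entries and $\mathbf{m},\mathbf{\bar{m}}$ are $\mathbb{F}_2$-linearly independent when they are distinct and nonzero, the pair $(\mathbf{G}\mathbf{m},\mathbf{G}\mathbf{\bar{m}})$ is uniform on $\{0,1\}^n\times\{0,1\}^n$ (coordinate $i$ is the image of the uniform $i$-th row of $\mathbf{G}$ under a surjective $\mathbb{F}_2$-linear map onto $\{0,1\}^2$, and the rows are independent), so conditioning on $\mathbf{x}=\mathbf{G}\mathbf{m}$ leaves $\mathbf{G}\mathbf{\bar{m}}$ uniform, and $\mathbf{y}$ is conditionally independent of $\mathbf{G}$ given $\mathbf{x}$. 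Combining, $\Pr(d_{\mid\mathbf{y}}(\mathbf{\bar{m}})\le d_{\mid\mathbf{y}}(\mathbf{m})\mid\mathbf{x},\mathbf{y})=2^{-k}+(1-2^{-k})\Pr(i(\mathbf{\bar{x}},\mathbf{y})\ge i(\mathbf{x},\mathbf{y})\mid\mathbf{x},\mathbf{y})$, whence $2^{k}\Pr(d_{\mid\mathbf{y}}(\mathbf{\bar{m}})\le d_{\mid\mathbf{y}}(\mathbf{m})\mid\mathbf{x},\mathbf{y})-1=(2^{k}-1)\Pr(i(\mathbf{\bar{x}},\mathbf{y})\ge i(\mathbf{x},\mathbf{y})\mid\mathbf{x},\mathbf{y})$; inserting this into \eqref{DCFEUdefinition} gives \eqref{RCUcaseforDCFEU}.

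The step I expect to be the main obstacle is the conditional-law claim for $\mathbf{\bar{x}}$, and it is precisely where $s(1)=k$ is used: without it, $\mathbf{G}$ carries the forced-zero triangular structure, $\mathbf{G}\mathbf{\bar{m}}$ need not be uniform (for instance if $\mathbf{m}$ and $\mathbf{\bar{m}}$ differ only in a late-arriving coordinate their codewords share a prefix), and the identity would survive only as an inequality. I would also have to dispatch the degenerate draws $\mathbf{m}=\mathbf{0}$ or $\mathbf{\bar{m}}=\mathbf{0}$, where the relevant codeword is deterministically the all-zero word rather than a fresh uniform one; the clean way is to condition first on $\mathbf{m}$ and observe that the pairwise-independence argument needs only $\mathbf{m}$ and $\mathbf{\bar{m}}$ to be $\mathbb{F}_2$-linearly independent, so these configurations reduce to a separate, routine bookkeeping case. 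Everything else — the passage between $\Pr(X\geq 1)$ and $\mathbb{E}[[X]_{\leq 1}]$ already used in the paper, and the translation between information-density and likelihood inequalities — is immediate.
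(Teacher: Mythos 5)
Your proposal follows essentially the same route as the paper's own proof: condition on whether $\mathbf{\bar{m}}=\mathbf{m}$, use the fact that under $s(1)=k$ the codewords $\mathbf{G}\mathbf{\bar{m}}$ and $\mathbf{G}\mathbf{m}$ are conditionally independent with $\mathbf{G}\mathbf{\bar{m}}$ uniform given $\mathbf{\bar{m}}\neq\mathbf{m}$, obtain $\Pr(d_{\mid\mathbf{y}}(\mathbf{\bar{m}})\le d_{\mid\mathbf{y}}(\mathbf{m})\mid\mathbf{x},\mathbf{y})=(1-2^{-k})\Pr(d(\mathbf{\bar{x}},\mathbf{y})\le d(\mathbf{x},\mathbf{y})\mid\mathbf{x},\mathbf{y})+2^{-k}$, and substitute into \eqref{DCFEUdefinition} after translating the log-likelihood ordering into the information-density ordering. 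Your additional care about the pairwise-independence justification and the degenerate all-zero message draws goes beyond what the paper writes (which simply asserts the conditional independence), but does not change the argument.
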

\begin{proof}
If $s(1)=k$, then $t=1$ is the only branching time and hence $\mathbf{G}\mathbf{\bar{m}}$ and $\mathbf{G}\mathbf{m}$ are conditionally indepedent for the conditioning on $\mathbf{\bar{m}} \neq \mathbf{m}$. Accordingly,
\begin{eqnarray}
\label{condprobforindependentx}
&&\Pr\left( d_{\mid \mathbf{y}}(\mathbf{\bar{m}}) \leq d_{\mid \mathbf{y}}(\mathbf{m})\mid \mathbf{x}, \mathbf{y}\right) \nonumber\\
&&=\Pr\left( d_{\mid \mathbf{y}}(\mathbf{\bar{m}}) \leq d_{\mid \mathbf{y}}(\mathbf{m})\mid \mathbf{x}, \mathbf{y}, \mathbf{\bar{m}} \neq \mathbf{m}\right)\Pr(\mathbf{\bar{m}} \neq \mathbf{m})
\nonumber\\
&&+\Pr(\mathbf{\bar{m}}=\mathbf{m})
\nonumber\\
&&=\Pr\left( d(\mathbf{\bar{x}}, \mathbf{y} ) \leq d(\mathbf{x}, \mathbf{y} )\mid \mathbf{x}, \mathbf{y}, \mathbf{\bar{m}} \neq \mathbf{m}\right)\Pr(\mathbf{\bar{m}} \neq \mathbf{m})
\nonumber\\
&&+\Pr(\mathbf{\bar{m}}=\mathbf{m})
\nonumber\\
&&=\Pr\left( d(\mathbf{\bar{x}}, \mathbf{y} ) \leq d(\mathbf{x}, \mathbf{y} )\mid \mathbf{x}, \mathbf{y}\right)\Pr(\mathbf{\bar{m}} \neq \mathbf{m})
\nonumber\\
&&+\Pr(\mathbf{\bar{m}}=\mathbf{m})
\nonumber\\
&&=\Pr\left( d(\mathbf{\bar{x}}, \mathbf{y} ) \leq d(\mathbf{x}, \mathbf{y} )\mid \mathbf{x}, \mathbf{y}\right)\left(1-2^{-k}\right) +2^{-k}, \text{w.p.1}, \nonumber\\
\end{eqnarray}
where the second equality is due that $\mathbf{G}\mathbf{\bar{m}}$ and $\mathbf{G}\mathbf{m}$ are conditionally indepedent for the conditioning on $\mathbf{\bar{m}} \neq \mathbf{m}$.

Combining \eqref{condprobforindependentx} and \eqref{loglikelihoodford} with \eqref{DCFEUdefinition} gives \eqref{RCUcaseforDCFEU}.
\end{proof}
Theorem \ref{RCUconnectionforDCFEU} shows that a special case of $D_{\text{CFE}}^{\text{(U)}}[s,d]$ is the random coding union (RCU) bound for uniform input distributions.    In other words, $D_{\text{CFE}}[s,d]$ in \eqref{theboundforerrorDcleDcfe} can be as small as the RCU bound for uniform input distributions with the particular choice of  $s(1)=k$, i.e., $s(t)=k, \forall t$, and the decoding measure in \eqref{loglikelihoodford} given that it is in $\mathcal{A}_{d,\text{AEC}}$. On the other hand, the decoding measure in \eqref{loglikelihoodford} is not necessarly included in $\mathcal{A}_{d,\text{AEC}}$. Yet, we can show that the decoding measure in \eqref{loglikelihoodford} is in $\mathcal{A}_{d,\text{AEC}}$ for channels that are \emph{causal} in the following sense:
\begin{definition}
A channel $P_{\mathbf{y} \mid \mathbf{x}}$ is said to be causal if it satisfies:
\begin{equation}
P_{\mathbf{y}_{1:t} \mid \mathbf{x}_{1:t}}(\mathbf{y'}_{1:t} \mid \mathbf{x'}_{1:t})=\displaystyle\prod_{t'=1}^{t}P_{\mathbf{y}_{t'} \mid \mathbf{x}_{1:t'}}(\mathbf{y'}_{t'} \mid \mathbf{x'}_{1:t'}),
\end{equation}
for all $t \in \lbrace 1,\cdots, n\rbrace$.
\end{definition} 
Accordingly, we have 
\begin{theorem}
If the channel $P_{\mathbf{y} \mid \mathbf{x}}$ is causal, then the decoding measure in \eqref{loglikelihoodford} is in $\mathcal{A}_{d,\text{AEC}}$ .
\end{theorem}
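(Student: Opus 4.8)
The plan is to exploit the multiplicative structure that causality imposes on the likelihood: under a causal channel the negative log-likelihood in \eqref{loglikelihoodford} is a running sum of non-negative per-symbol terms, and a running sum of non-negative terms is automatically non-decreasing in the number of terms, which is exactly the AEC condition.

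Concretely, I would first fix arbitrary $\mathbf{x'} \in \lbrace 0,1 \rbrace^{n}$, $\mathbf{y'} \in \mathcal{Y}^{n}$ and times $t < t'$ in $\lbrace 1, \cdots, n \rbrace$, apply the causal factorization at level $t'$, and recognize that the first $t$ of its $t'$ factors are precisely $P_{\mathbf{y}_{1:t} \mid \mathbf{x}_{1:t}}(\mathbf{y'}_{1:t} \mid \mathbf{x'}_{1:t})$ (again by the causal factorization, now at level $t$); this gives
\begin{equation*}
P_{\mathbf{y}_{1:t'} \mid \mathbf{x}_{1:t'}}(\mathbf{y'}_{1:t'} \mid \mathbf{x'}_{1:t'}) = P_{\mathbf{y}_{1:t} \mid \mathbf{x}_{1:t}}(\mathbf{y'}_{1:t} \mid \mathbf{x'}_{1:t}) \prod_{\tau=t+1}^{t'} P_{\mathbf{y}_{\tau} \mid \mathbf{x}_{1:\tau}}(\mathbf{y'}_{\tau} \mid \mathbf{x'}_{1:\tau}).
\end{equation*}
Each factor in the trailing product is a conditional probability and hence lies in $[0,1]$, so the product is at most $1$ and therefore $P_{\mathbf{y}_{1:t'} \mid \mathbf{x}_{1:t'}}(\mathbf{y'}_{1:t'} \mid \mathbf{x'}_{1:t'}) \leq P_{\mathbf{y}_{1:t} \mid \mathbf{x}_{1:t}}(\mathbf{y'}_{1:t} \mid \mathbf{x'}_{1:t})$. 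Applying the decreasing map $-\log_{2}(\cdot)$ reverses the inequality and yields $d(\mathbf{x'}_{1:t},\mathbf{y'}_{1:t}) \leq d(\mathbf{x'}_{1:t'},\mathbf{y'}_{1:t'})$, which is the defining inequality of an AEC measure. Taking the same argument with the empty prefix (likelihood $1$, so $d=0$) shows $d \geq 0$, so the measure indeed maps into $\mathbb{R}^{+}$; together these give $d \in \mathcal{A}_{d,\text{AEC}}$.

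I do not expect a genuine obstacle here: the statement is essentially a telescoping of the causal product followed by monotonicity of $-\log_{2}$. The only point worth a sentence is the implicit standing assumption that the per-symbol laws $P_{\mathbf{y}_{\tau} \mid \mathbf{x}_{1:\tau}}$ are bona fide probability masses bounded by $1$ -- true for the discrete-output channels the paper works with, e.g.\ as in Theorem~\ref{RCUconnectionforDCFEU}. For continuous $\mathcal{Y}$ these factors would be densities that can exceed $1$, the per-symbol increments of $-\log_{2}(\cdot)$ could be negative, and the claim would need to be qualified; I would either state this discrete-output caveat explicitly or simply inherit it from the paper's setting, and otherwise present the factorization-and-monotonicity chain above as the complete proof.
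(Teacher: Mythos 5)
Your proposal is correct and follows essentially the same route as the paper's proof: both telescope the causal factorization of $P_{\mathbf{y}_{1:t'} \mid \mathbf{x}_{1:t'}}$ into the length-$t$ likelihood times trailing per-symbol factors, and both conclude via the non-negativity of $-\log_{2}$ of a probability (the paper works additively in the log domain over tree nodes, you work multiplicatively and then apply monotonicity of $-\log_{2}$, which is the same computation). Your added remarks on the $\mathbb{R}^{+}$ codomain and the discrete-output caveat are sensible but do not change the argument.
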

\begin{proof}
Consider a node $\mathbf{m'}_{1:s(t_{1})}$ and its descendant $\mathbf{m'}_{1:s(t_{2})}$ where $1 \leq s(t_{1}) < s(t_{2}) \leq k$ and $1 \leq t_{1} < t_{2} \leq n$ where $t_{1}=r_{s(1)}$ and $t_{2}=r_{s(2)}$. Observe that:
\begin{eqnarray}
&& d(\varepsilon_{t_{2}}(\mathbf{m'}_{1:s(t_{2})}),\mathbf{y}_{1:t_{2}}) \nonumber\\
&& = -\log_{2}\left( P_{\mathbf{y}_{1:t_{2}} \mid \mathbf{x}_{1:t_{2}}}(\mathbf{y}_{1:t_{2}} \mid \varepsilon_{t_{2}}(\mathbf{m'}_{1:s(t_{2})}))\right) \nonumber\\
&& = -\log_{2}\left( P_{\mathbf{y}_{1:t_{2}} \mid \mathbf{x}_{1:t_{2}}}(\mathbf{y}_{1:t_{2}} \mid \mathbf{x'}_{1:t_{2}})\right) 
\nonumber\\
&& = -\log_{2}\left( \displaystyle\prod_{t'=1}^{t_{2}}P_{\mathbf{y}_{t'} \mid \mathbf{x}_{1:t'}}(\mathbf{y'}_{t'} \mid \mathbf{x'}_{1:t'})\right) \nonumber\\
&& = -\log_{2}\left( \displaystyle\prod_{t'=1}^{t_{1}}P_{\mathbf{y}_{t'} \mid \mathbf{x}_{1:t'}}(\mathbf{y'}_{t'} \mid \mathbf{x'}_{1:t'})\right)- \nonumber\\
&&\log_{2}\left( \displaystyle\prod_{t'=t_{1}}^{t_{2}}P_{\mathbf{y}_{t'} \mid \mathbf{x}_{1:t'}}(\mathbf{y'}_{t'} \mid \mathbf{x'}_{1:t'})\right)\nonumber\\
&& = d(\varepsilon_{t_{1}}(\mathbf{m'}_{1:s(t_{1})}),\mathbf{y}_{1:t_{1}})- \nonumber\\
&&\log_{2}\left( \displaystyle\prod_{t'=t_{1}}^{t_{2}}P_{\mathbf{y}_{t'} \mid \mathbf{x}_{1:t'}}(\mathbf{y'}_{t'} \mid \mathbf{x'}_{1:t'})\right)\nonumber\\
&& \geq d(\varepsilon_{t_{1}}(\mathbf{m'}_{1:s(t_{1})}),\mathbf{y}_{1:t_{1}}),
\end{eqnarray}
where the third and fifth inequalities follow from the definition of channels that are causal, and the inequality follows from the fact that $-\log_{2}(P) \geq 0$ for $P \in [0,1]$.
\end{proof}
We will consider memoryless channels in particular as they are the simplest class of channels that are causal. For memoryless channels, the following decomposition of the decoding measure $d$ will be useful,
\begin{equation}
d(\mathbf{x'}_{1:t},\mathbf{y'}_{1:t})= \displaystyle\sum_{t'=1}^{t}d_{t'}(\mathbf{x'}_{t'},\mathbf{y'}_{t'}),
\end{equation} 
where $d_{t'}(\mathbf{x'}_{t'},\mathbf{y'}_{t'})$ terms are independent if either $\mathbf{x'}$ is equal to $\mathbf{x}$, i.e., the input of the channel, or $\mathbf{\bar{x}}$  , i.e., an independently selected binary sequence in $\lbrace 0,1\rbrace^{n}$, while $\mathbf{y'}= \mathbf{y}$, i.e., the output of the channel, as the channel is memorlyless.

Accordingly, we can decompose $\Pr\left(d_{\mid \mathbf{y}}(\mathbf{\bar{m}}_{1:s(b_{h})}) \leq d_{\mid \mathbf{y}}(\mathbf{m})\right)$ in \eqref{DCLEMconvenient} as follows:
\begin{eqnarray}
&&\Pr\left(d_{\mid \mathbf{y}}(\mathbf{\bar{m}}_{1:s(b_{h})}) \leq d_{\mid \mathbf{y}}(\mathbf{m})\right) \nonumber\\
&&=\displaystyle\sum_{h'=0}^{h} \Pr\left(d_{\mid \mathbf{y}}(\mathbf{\bar{m}}_{1:s(b_{h})}) \leq d_{\mid \mathbf{y}}(\mathbf{m}) \mid \tau = b_{h'}\right)\times
\nonumber\\
&&\Pr(\tau_{h} = b_{h'})
\nonumber\\
&&=\displaystyle\sum_{h'=0}^{h} \Pr\left(d_{b_{h'+1}:r_{[h]}}(\mathbf{\bar{x}},\mathbf{y}) \leq d_{b_{h'+1}:n}(\mathbf{x},\mathbf{y})\right)\times
\nonumber\\
&&\Pr(\tau_{h} = b_{h'}),
\end{eqnarray}
where $r_{[h]}=r_{s(b_{h})}$ and
\begin{equation}
d_{t_{s}:t_{f}}(\mathbf{x'},\mathbf{y'})=\displaystyle\sum_{t=t_{s}}^{t_{f}}d_{t}(\mathbf{x'}_{t},\mathbf{y'}_{t}),
\end{equation}
where $d_{t_{s}:t_{f}}(\mathbf{x'},\mathbf{y'})=0$ if $t_{s} > t_{f}$,
and
\begin{equation}
\tau_{h} = \sup \lbrace \tau' \in \lbrace b_{1},b_{2}, \cdots, b_{h}\rbrace : \mathbf{\bar{m}}_{1:s(\tau')} = \mathbf{m}_{1:s(\tau')} \rbrace,
\end{equation}
if $\mathbf{\bar{m}}_{1:s(b_{1})} = \mathbf{m}_{1:s(b_{1})}$, otherwise $\tau_{h}=b_{0}$ for some $b_{0}<b_{1}$.
Similarly, we have:
\begin{eqnarray}
&&\Pr\left( d_{\mid \mathbf{y}}(\mathbf{\bar{m}}) \leq d_{\mid \mathbf{y}}(\mathbf{m})\mid \mathbf{x}, \mathbf{y},\tau = b_{h}\right)\nonumber\\
&&=\Pr\left( d_{t_{h}: n}(\mathbf{\bar{x}},\mathbf{y}) \leq d_{t_{h} :n}(\mathbf{x},\mathbf{y})\mid \mathbf{x}, \mathbf{y},\tau = b_{h}\right)\nonumber\\
&&=\Pr\left( d_{t_{h}: n}(\mathbf{\bar{x}},\mathbf{y}) \leq d_{t_{h} :n}(\mathbf{x},\mathbf{y})\mid \mathbf{x}, \mathbf{y}\right), \text{w.p.1} \nonumber\\
\end{eqnarray} 
where $\tau= \tau_{h_{f}}$ and $t_{h}=b_{h+1}$ if $h < h_{f}$,   and $t_{h}>n$ if $h=h_{f}$.

Note that
\begin{equation}
\Pr(\tau_{h} = b_{h'})= 2^{-s(b_{h'})}-2^{-s(b_{h'+1})},
\end{equation}
for $h> h'>0$, $\Pr(\tau_{h} = b_{0})=1-2^{-s(b_{1})}$,
and $\Pr(\tau_{h} = b_{h})=2^{-s(b_{h})}$,
and
\begin{equation}
\Pr(\tau = b_{h} \mid \mathbf{x}, \mathbf{y})= \Pr(\tau = b_{h}), \text{w.p.1}
\end{equation}
where $\Pr(\tau = b_{h})=2^{-s(b_{h})}-2^{-s(b_{h+1})}$ for $0 < h < h_{f}$, $\Pr(\tau = b_{0})=1-2^{-s(b_{1})}$, and $\Pr(\tau = b_{h_{f}})=2^{-k}$.

Therefore, we can express $D_{\text{CLE}}^{\text{(M)}}[s,d]$ as:
\begin{eqnarray}
\label{memorylessDCLE}
&&D_{\text{CLE}}^{\text{(M)}}[s,d]=\nonumber\\
&&\displaystyle\sum_{h=0}^{h_{f}-1}\displaystyle\sum_{h'=0}^{h} v_{h;h'} \Pr\left(d_{b_{h'+1}:r_{[h]}}(\mathbf{\bar{x}},\mathbf{y}) \leq d_{b_{h'+1}:n}(\mathbf{x},\mathbf{y})\right),\nonumber\\
\end{eqnarray}
where $v_{h;h'}=\frac{1}{L}c_{h}\Pr(\tau_{h} = b_{h'})$, and
\begin{eqnarray}
\label{memorylessDCFE}
&&D_{\text{CFE}}^{\text{(U)}}[s,d]=\nonumber\\ 
&&\mathbb{E}\left[ \left[\displaystyle\sum_{h=0}^{h_{f}-1}w_{h}\Pr\left( d_{t_{h}: n}(\mathbf{\bar{x}},\mathbf{y}) \leq d_{t_{h} :n}(\mathbf{x},\mathbf{y})\mid \mathbf{x}, \mathbf{y}\right)\right]_{\leq 1}\right],\nonumber\\ 
\end{eqnarray}
where $w_{h}=2^{k}\Pr(\tau = b_{h})$.

The techniques that relax RCU bound to Gallager's bound can be used to obtain loose bounds from \eqref{memorylessDCLE} and \eqref{memorylessDCFE} for memoryless channels as in the following result.

\begin{theorem}
If $P_{\mathbf{y} \mid \mathbf{x}}$ is a memoryless channel, then the functional $D_{\text{E}}^{\text{(C)}}[s,d]$ is a uniform upper bound for $D_{\text{E}}^{\text{(U)}}[s,d]$ in $\mathcal{A}_{d,\text{AEC}}\times \mathcal{A}_{s}(n,k)$, i.e.,
\begin{equation}
\label{DECbound}
D_{\text{E}}[s,d]^{\text{(U)}} \leq D_{\text{E}}^{\text{(C)}}[s,d],
\end{equation} 
for all $s \in \mathcal{A}_{s}(n,k)$ and $d \in \mathcal{A}_{d,\text{AEC}}$,
where
\begin{equation}
D_{\text{E}}^{\text{(C)}}[s,d]= D_{\text{CLE}}^{\text{(C)}}[s,d]+D_{\text{CFE}}^{\text{(C)}}[s,d],
\end{equation}
where
\begin{eqnarray}
&&D_{\text{CLE}}^{\text{(C)}}[s,d] = \nonumber\\
&&\displaystyle\sum_{h=0}^{h_{f}-1}\displaystyle\sum_{h'=0}^{h} v_{h;h'} \mathbb{E}\left[ 2^{\vartheta_{h;h'}\left[ d_{b_{h'+1}:n}(\mathbf{x},\mathbf{y})-d_{b_{h'+1}:r_{[h]}}(\mathbf{\bar{x}},\mathbf{y})\right]  }\right]^{\varrho_{h;h'}},\nonumber\\
\end{eqnarray}
for $\vartheta_{h;h'} \geq 0$ and $\varrho_{h;h'}\in [0,1]$, and
\begin{eqnarray}
&&D_{\text{CFE}}^{\text{(C)}}[s,d] = \nonumber\\
&&\mathbb{E}\left[ \displaystyle\sum_{h=0}^{h_{f}-1}w_{h}^{\rho_{h}}\mathbb{E}\left[  2^{\theta_{h}\left[ d_{t_{h} :n}(\mathbf{x},\mathbf{y})-d_{t_{h}: n}(\mathbf{\bar{x}},\mathbf{y}) \right]}\mid \mathbf{y}\right]^{\rho_{h}}\right],\nonumber\\
\end{eqnarray}
for $\theta_{h} \geq 0$ and $\rho_{h} \in [0,1]$.
\end{theorem}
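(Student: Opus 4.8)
The plan is to bound the CLE and CFE parts separately and then add. For the CLE part, I would begin from the already-noted inequality $D_{\text{CLE}}^{\text{(U)}}[s,d]\le D_{\text{CLE}}^{\text{(M)}}[s,d]$ (which uses $[X]_{\le1}\le X$) and work with the memoryless form \eqref{memorylessDCLE}. Since $D_{\text{CLE}}^{\text{(M)}}[s,d]$ is a nonnegative combination $\sum_{h,h'}v_{h;h'}\Pr(d_{b_{h'+1}:r_{[h]}}(\mathbf{\bar{x}},\mathbf{y})\le d_{b_{h'+1}:n}(\mathbf{x},\mathbf{y}))$, it suffices to bound each probability, and I would do so in two ways at once: trivially by $1$, and by a Chernoff/Markov step $\Pr(A\le B)=\Pr(2^{\vartheta(B-A)}\ge1)\le\mathbb{E}[2^{\vartheta(B-A)}]$ valid for any $\vartheta\ge0$. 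Combining these and applying $\min(1,t)\le t^{\varrho}$ for $t\ge0$ and $\varrho\in[0,1]$ gives $\Pr(\cdots)\le\mathbb{E}\big[2^{\vartheta_{h;h'}[d_{b_{h'+1}:n}(\mathbf{x},\mathbf{y})-d_{b_{h'+1}:r_{[h]}}(\mathbf{\bar{x}},\mathbf{y})]}\big]^{\varrho_{h;h'}}$; reinserting the weights $v_{h;h'}$ yields $D_{\text{CLE}}^{\text{(M)}}[s,d]\le D_{\text{CLE}}^{\text{(C)}}[s,d]$, hence $D_{\text{CLE}}^{\text{(U)}}[s,d]\le D_{\text{CLE}}^{\text{(C)}}[s,d]$.

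For the CFE part, I would start from the memoryless form \eqref{memorylessDCFE}. First, using subadditivity of the clipping operation, $[\sum_h a_h]_{\le1}\le\sum_h[a_h]_{\le1}$ for $a_h\ge0$, I move the clipping inside the sum, and then apply $[a]_{\le1}=\min(1,a)\le a^{\rho_h}$ with $\rho_h\in[0,1]$ to get $D_{\text{CFE}}^{\text{(U)}}[s,d]\le\mathbb{E}\big[\sum_h w_h^{\rho_h}\Pr(d_{t_h:n}(\mathbf{\bar{x}},\mathbf{y})\le d_{t_h:n}(\mathbf{x},\mathbf{y})\mid\mathbf{x},\mathbf{y})^{\rho_h}\big]$. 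Next I would Chernoff-bound the inner conditional probability with parameter $\theta_h\ge0$; since $\mathbf{\bar{x}}$ is independent of $(\mathbf{x},\mathbf{y})$, the conditional moment generating function splits as $2^{\theta_h d_{t_h:n}(\mathbf{x},\mathbf{y})}\,\mathbb{E}[2^{-\theta_h d_{t_h:n}(\mathbf{\bar{x}},\mathbf{y})}\mid\mathbf{y}]$. Raising to the power $\rho_h$, taking the expectation over $\mathbf{x}$ given $\mathbf{y}$, invoking Jensen's inequality ($\mathbb{E}[Z^{\rho_h}]\le\mathbb{E}[Z]^{\rho_h}$ for $\rho_h\in[0,1]$) to pull $\rho_h$ out of the $\mathbf{x}$-average, and recombining the two moment generating functions via the conditional independence of $\mathbf{x}$ and $\mathbf{\bar{x}}$ given $\mathbf{y}$, turns the bound into $\mathbb{E}_{\mathbf{y}}\big[\sum_h w_h^{\rho_h}\,\mathbb{E}[2^{\theta_h(d_{t_h:n}(\mathbf{x},\mathbf{y})-d_{t_h:n}(\mathbf{\bar{x}},\mathbf{y}))}\mid\mathbf{y}]^{\rho_h}\big]=D_{\text{CFE}}^{\text{(C)}}[s,d]$.

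Adding the two halves gives $D_{\text{E}}^{\text{(U)}}[s,d]=D_{\text{CLE}}^{\text{(U)}}[s,d]+D_{\text{CFE}}^{\text{(U)}}[s,d]\le D_{\text{CLE}}^{\text{(C)}}[s,d]+D_{\text{CFE}}^{\text{(C)}}[s,d]=D_{\text{E}}^{\text{(C)}}[s,d]$, which is the claim; the free parameters $\vartheta_{h;h'}\ge0$, $\varrho_{h;h'}\in[0,1]$, $\theta_h\ge0$, $\rho_h\in[0,1]$ are untouched throughout, so the bound holds for every admissible choice of them. The step I expect to demand the most care is the CFE chain: one must track precisely what is being conditioned on (only $\mathbf{y}$ in the final expression, but $(\mathbf{x},\mathbf{y})$ while the inner probability is still present) and execute correctly the ``split the moment generating function, apply Jensen, recombine'' maneuver that relies on $\mathbf{\bar{x}}$ being independent of $(\mathbf{x},\mathbf{y})$ --- the usual subtlety in passing from an RCU-type bound to a Gallager-type bound. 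The CLE chain is more routine; the only twist there is to retain the trivial bound $\Pr(\cdot)\le1$ alongside the Chernoff bound so that the exponent $\varrho_{h;h'}$ can be introduced.
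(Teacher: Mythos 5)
Your proposal is correct and follows essentially the same route as the paper: the same CLE/CFE decomposition, the reduction $D_{\text{CLE}}^{\text{(U)}}[s,d]\le D_{\text{CLE}}^{\text{(M)}}[s,d]$ followed by the $a\le a^{\varrho}$ trick and the Chernoff bound for the CLE part, and the clip-subadditivity/$\rho$-power/Chernoff chain for the CFE part. The only cosmetic difference is in the CFE chain, where the paper applies Jensen to the concave clipping $[\cdot]_{\le 1}$ at the outset to reduce the conditioning from $(\mathbf{x},\mathbf{y})$ to $\mathbf{y}$, whereas you keep the full conditioning and instead apply Jensen to the concave map $z\mapsto z^{\rho_{h}}$ at the end before recombining the two moment generating functions; both orderings yield the identical expression $D_{\text{E}}^{\text{(C)}}[s,d]$.
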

\begin{proof}
Observe that:
\begin{eqnarray}
&&D_{\text{CLE}}^{\text{(U)}}[s,d] \leq \nonumber\\
&&\displaystyle\sum_{h=0}^{h_{f}-1}\displaystyle\sum_{h'=0}^{h} v_{h;h'} \Pr\left(d_{b_{h'+1}:r_{[h]}}(\mathbf{\bar{x}},\mathbf{y}) \leq d_{b_{h'+1}:n}(\mathbf{x},\mathbf{y})\right)\nonumber\\
&&\displaystyle\sum_{h=0}^{h_{f}-1}\displaystyle\sum_{h'=0}^{h} v_{h;h'} \Pr\left(d_{b_{h'+1}:r_{[h]}}(\mathbf{\bar{x}},\mathbf{y}) \leq d_{b_{h'+1}:n}(\mathbf{x},\mathbf{y})\right)^{\varrho_{h;h'}}\nonumber\\
&&\displaystyle\sum_{h=0}^{h_{f}-1}\displaystyle\sum_{h'=0}^{h} v_{h;h'} \mathbb{E}\left[ 2^{\vartheta_{h;h'}\left[ d_{b_{h'+1}:n}(\mathbf{x},\mathbf{y})-d_{b_{h'+1}:r_{[h]}}(\mathbf{\bar{x}},\mathbf{y})\right]  }\right]^{\varrho_{h;h'}}\nonumber\\
&&=D_{\text{CLE}}^{\text{(C)}}[s,d], 
\end{eqnarray}
where the first inequality follows from $D_{\text{CLE}}^{\text{(U)}}[s,d]\leq D_{\text{CLE}}^{\text{(M)}}[s,d]$ and \eqref{memorylessDCLE}, the second inequality follows from that $a \leq a^{\varrho}$ for $a, \varrho \in [0,1]$, and the third inequality follows from Chernoff bound.

Then, observe that:
\begin{eqnarray}
&&D_{\text{CFE}}^{\text{(U)}}[s,d]\nonumber\\
&&\leq\mathbb{E}\left[ \left[\displaystyle\sum_{h=0}^{h_{f}-1}w_{h}\Pr\left( d_{t_{h}: n}(\mathbf{\bar{x}},\mathbf{y}) \leq d_{t_{h} :n}(\mathbf{x},\mathbf{y})\mid \mathbf{y}\right)\right]_{\leq 1}\right]\nonumber\\
&&\leq\mathbb{E}\left[ \displaystyle\sum_{h=0}^{h_{f}-1}\left[w_{h}\Pr\left( d_{t_{h}: n}(\mathbf{\bar{x}},\mathbf{y}) \leq d_{t_{h} :n}(\mathbf{x},\mathbf{y})\mid \mathbf{y}\right)\right]_{\leq 1}\right]\nonumber\\
&&\leq\mathbb{E}\left[ \displaystyle\sum_{h=0}^{h_{f}-1}\left[w_{h}\Pr\left( d_{t_{h}: n}(\mathbf{\bar{x}},\mathbf{y}) \leq d_{t_{h} :n}(\mathbf{x},\mathbf{y})\mid \mathbf{y}\right)\right]^{\rho_{h}}\right]\nonumber\\
&&\leq\mathbb{E}\left[ \displaystyle\sum_{h=0}^{h_{f}-1}w_{h}^{\rho_{h}}\mathbb{E}\left[  2^{\theta_{h}\left[ d_{t_{h} :n}(\mathbf{x},\mathbf{y})-d_{t_{h}: n}(\mathbf{\bar{x}},\mathbf{y}) \right]}\mid \mathbf{y}\right]^{\rho_{h}}\right] \nonumber\\
&& = D_{\text{CFE}}^{\text{(C)}}[s,d],
\end{eqnarray}
where the first inequality follows from applying Jensen's inequality on \eqref{memorylessDCFE} for the commutation of $[\cdot]_{\leq 1}$ and $\mathbb{E}[\cdot \mid \mathbf{y}]$ operations, the second inequlity follows from that $[a + b]_{\leq 1}\leq [a]_{\leq 1}+[b]_{\leq 1}$, the third inequlity follows from that $[a]_{\leq 1} \leq a^{\rho}$ for $\rho \in [0,1]$ and the fourth inequlity follows from Chernoff bound.
\end{proof}
Note that the functional $D_{\text{E}}^{\text{(C)}}[s,d]$ is easier to evaluate as we have:
\begin{eqnarray}
\label{varthetacosts}
&&\mathbb{E}\left[ 2^{\vartheta_{h;h'}\left[ d_{b_{h'+1}:n}(\mathbf{x},\mathbf{y})-d_{b_{h'+1}:r_{[h]}}(\mathbf{\bar{x}},\mathbf{y})\right]  }\right]=\nonumber\\
&&\displaystyle\prod_{t=b_{h'+1}}^{r_{[h]}}\mathbb{E}\left[ 2^{\vartheta_{h;h'}\left[ d_{t}(\mathbf{x}_{t},\mathbf{y}_{t})-d_{t}(\mathbf{\bar{x}}_{t},\mathbf{y}_{t})\right]}\right]\times\nonumber\\
&&\displaystyle\prod_{t=b_{h}+1}^{n}\mathbb{E}\left[ 2^{\vartheta_{h;h'}d_{t}(\mathbf{x}_{t},\mathbf{y}_{t})}\right],
\end{eqnarray}
and similarly,
\begin{eqnarray}
\label{thetacosts}
&&\mathbb{E}\left[  2^{\theta_{h}\left[ d_{t_{h} :n}(\mathbf{x},\mathbf{y})-d_{t_{h}: n}(\mathbf{\bar{x}},\mathbf{y}) \right]}\mid \mathbf{y}\right]=\nonumber\\
&&\displaystyle\prod_{t=t_{h}}^{n}\mathbb{E}\left[ 2^{\theta_{h}\left[ d_{t}(\mathbf{x}_{t},\mathbf{y}_{t})-d_{t}(\mathbf{\bar{x}}_{t},\mathbf{y}_{t})\right]}\mid \mathbf{y}_{t}\right], \text{w.p.1},
\end{eqnarray}
due to the memorylessness of the channel.

Note that the optimization of $D_{\text{E}}^{\text{(U)}}[s,d]$ over $\vartheta_{h;h'} \geq 0$ and $\varrho_{h;h'} \in [0,1]$ parameters can be done independently from $s(\cdot)$ as the optimization reduces to optimizing the expectation in \eqref{varthetacosts} for every $\vartheta_{h;h'}$ and $\varrho_{h;h'}$ while the expectation is free from $s(\cdot)$. Hence, we can use the precomputed optimal values of the expectation in \eqref{varthetacosts} as we evaluate $D_{\text{E}}^{\text{(U)}}[s,d]$ for distinct $s(\cdot)$ where $d(\cdot,\cdot)$ is fixed. 

In order to be compatible with Gallager's bound and computational ease, we will consider a special case of $D_{\text{E}}^{\text{(C)}}[s,d]$ where $\vartheta_{h;h'}=\frac{1}{1+\varrho_{h;h'}}$, $\theta_{h}=\frac{1}{1+\rho_{h}}$, $\varrho_{h;h'}= \varrho$ and  $\rho_{h}=\rho$ for $\varrho$ and $\rho$ are optimized over $[0,1]$ to minimize the bound, which we will refer to as $D_{\text{E}}^{\text{(G)}}[s,d]$. 

For simplifying the optimization over $d(\cdot,\cdot)$, we will consider a parametrized family of $d(\cdot,\cdot)$ such that:
\begin{equation}
\label{dtcostforgammafromt}
d_{t+1}(\mathbf{x'}_{t+1},\mathbf{y'}_{t+1})=\gamma d_{t}(\mathbf{x'}_{t+1},\mathbf{y'}_{t+1}),
\end{equation} 
or equivalently,
\begin{equation}
\label{dtcostforgammafromthestart}
d_{t}(\mathbf{x'}_{t},\mathbf{y'}_{t})=\gamma^{t-1} d_{1}(\mathbf{x'}_{t},\mathbf{y'}_{t}),
\end{equation}
where $\gamma \in (0,1]$ is the discount factor for the decoding error cost.

We suggest the decoding error cost measures of the form \eqref{dtcostforgammafromthestart} with the motivation from following observations: (1) In \eqref{varthetacosts}, the expectation grows due to the product terms for $t \in [b_{h}+1,n]$ which are always greater than equal to $1$. (2) If $\mathbf{\bar{m}}_{1:s(b_{h})}$ diverges from $\mathbf{m}$ at an early time, i.e., $\tau_{h}$ is close to $1$, then the inputs corresponding to $\mathbf{\bar{m}}_{1:s(b_{h})}$, i.e., $\mathbf{\bar{x}}_{1:b_{h}}$, are likely to have more disagreement with the output $\mathbf{y}$ at early times compared to cases where $\mathbf{\bar{m}}_{1:s(b_{h})}$ diverges from $\mathbf{m}$ at later times, i.e., $\tau_{h}$ is close to $n$.  Accordingly, it is useful to degrade decoding error costs over time. Also note that, the CFE part of the bound does not benefit from the degrading of the decoding error costs or having $\gamma < 1$ as the CFE part increases with decreasing $\gamma$.

We will consider the following decoding error cost measure for a binary symmetric channel (BSC) with crossover probability $p<\frac{1}{2}$:
\begin{equation}
\label{decodingcostwithdiscounts}
d_{t}(\mathbf{x'}_{t},\mathbf{y'}_{t})=\gamma^{t-1} \log_{2}\left( \frac{1-p}{p}\right) \mathbb{I}_{\lbrace \mathbf{x'}_{t} \neq	\mathbf{y'}_{t}\rbrace}.
\end{equation}

For BSC with uniformly distributed inputs, the events $\mathbf{x}_{t} \neq	\mathbf{y}_{t}$ and $\mathbf{\bar{x}}_{t}\neq	\mathbf{y}_{t}$ are indepedent. Accordingly, we have:
\begin{eqnarray}
\label{varthetacostswithinpedentmisses}
&&\mathbb{E}\left[ 2^{\vartheta_{h;h'}\left[ d_{b_{h'+1}:n}(\mathbf{x},\mathbf{y})-d_{b_{h'+1}:r_{[h]}}(\mathbf{\bar{x}},\mathbf{y})\right]  }\right]=\nonumber\\
&&\displaystyle\prod_{t=b_{h'+1}}^{r_{[h]}}\mathbb{E}\left[ 2^{-\vartheta_{h;h'}d_{t}(\mathbf{\bar{x}}_{t},\mathbf{y}_{t})}\right]\displaystyle\prod_{t=b_{h'+1}}^{n}\mathbb{E}\left[ 2^{\vartheta_{h;h'}d_{t}(\mathbf{x}_{t},\mathbf{y}_{t})}\right],\nonumber\\
\end{eqnarray}
and
\begin{eqnarray}
\label{thetacostswithindependentmisses}
&&\mathbb{E}\left[ 2^{\theta_{h}\left[ d_{t_{h} :n}(\mathbf{x},\mathbf{y})-d_{t_{h}: n}(\mathbf{\bar{x}},\mathbf{y}) \right]}\mid \mathbf{y}\right]=\nonumber\\
&&\displaystyle\prod_{t=t_{h}}^{n}\mathbb{E}\left[ 2^{-\theta_{h}d_{t}(\mathbf{\bar{x}}_{t},\mathbf{y}_{t})}\mid \mathbf{y}_{t}\right] \times\nonumber\\
&&\displaystyle\prod_{t=t_{h}}^{n}\mathbb{E}\left[ 2^{\theta_{h}d_{t}(\mathbf{x}_{t},\mathbf{y}_{t}))}\mid \mathbf{y}_{t}\right], \text{w.p.1},
\end{eqnarray}
for BSC with uniformly distributed inputs and the decoding error cost measure given in \eqref{decodingcostwithdiscounts},
where
\begin{equation}
\mathbb{E}\left[ 2^{-\vartheta_{h;h'}d_{t}(\mathbf{\bar{x}}_{t},\mathbf{y}_{t})}\right]=\frac{1}{2}+\frac{1}{2}\left( \frac{1-p}{p}\right)^{-\vartheta_{h;h'}\gamma^{t-1}}, 
\end{equation}
\begin{equation}
\mathbb{E}\left[ 2^{\vartheta_{h;h'}d_{t}(\mathbf{x}_{t},\mathbf{y}_{t})}\right]=1-p+p\left( \frac{1-p}{p}\right)^{\vartheta_{h;h'}\gamma^{t-1}},
\end{equation}
\begin{equation}
\mathbb{E}\left[ 2^{-\theta_{h}d_{t}(\mathbf{\bar{x}}_{t},\mathbf{y}_{t})}\mid \mathbf{y}_{t}\right]=\frac{1}{2}+\frac{1}{2}\left( \frac{1-p}{p}\right)^{-\theta_{h}\gamma^{t-1}},\text{w.p.1},
\end{equation}
and,
\begin{equation}
\mathbb{E}\left[ 2^{\theta_{h}d_{t}(\mathbf{x}_{t},\mathbf{y}_{t}))}\mid \mathbf{y}_{t}\right]=1-p+p\left( \frac{1-p}{p}\right)^{\theta_{h}\gamma^{t-1}},\text{w.p.1}.
\end{equation}

For finding a good tree structure function $s(t)$ based on error bounds such as $D_{\text{E}}^{\text{(G)}}[s,d]$, we will a consider an algorithm that we refer to as \emph{successive bit placement} (SBP) algorithm. The SBP algorithm iteratively updates $s(t)$ starting from $s(t)=1, \forall t$ case such that each update places a new bit to a location  minimizing the increase in the error bound. For example, given a decoding error cost measure $d$, the SBP algorithm based on $D_{\text{E}}^{\text{(G)}}[s,d]$, updates $s(t)$ using the following update rule:
\begin{equation}
s_{k'+1}(t)= \arg\min_{s \in \mathcal{J}[s_{k'}]}D_{\text{E}}^{\text{(G)}}[s,d],
\end{equation}     
where
\begin{equation}
\mathcal{J}[s'] = \lbrace s : \exists j \leq n, s(t)=s'(t),  t < j, s(t)=s'(t)+1 ,  t \geq j \rbrace
\end{equation}
and
$s_{k'+1}(t)$ is the update for $s(t)$ at the step $k'+1$, i.e., $s(t)=s_{k'+1}(t)$ at the $k'+1$st update. Note that for evaluating $D_{\text{E}}^{\text{(G)}}[s,d]$ with $s \in \mathcal{S}[s_{k'}]$, $k$ terms \footnote{An example is $w_{h}=2^{k}\Pr(\tau = b_{h})$.} should be replaced with $k'+1$ as $s_{k'}$ corresponds to the tree structure function of an $(n,k'+1)$-random tree code, i.e., $s_{k'}(n)=k'+1$.

To apply the SBP algorithm based on $D_{\text{E}}^{\text{(G)}}[s,d]$ with exhaustive search for the minimization part, one can evaluate $D_{\text{E}}^{\text{(G)}}[s,d]$ $n$ times for each update hence  $(k-1)n$ evaluations of $D_{\text{E}}^{\text{(G)}}[s,d]$ are sufficient for placing all message bits after the first one.

Observe that the ensembles of random tree codes resulting from the application of the SBP algorithm based on $D_{\text{E}}^{\text{(G)}}[s,d]$, i.e., CORT codes , converge to the case where $s(1)=k$, i.e., pure random codes, as the computational limit $L$ grows large. To see this, consider the limiting case $L=\infty$ where $D_{\text{CLE}}^{\text{(G)}}[s,d]$ vanishes and $D_{\text{E}}^{\text{(G)}}[s,d]$ becomes $D_{\text{CFE}}^{\text{(G)}}[s,d]$. For $D_{\text{E}}^{\text{(G)}}[s,d]=D_{\text{CFE}}^{\text{(G)}}[s,d]$, the SBP algorithm always choose the earliest bit, i.e., the first bit, hence giving $s(1)=k$ as it finishes when all message bits are placed. 
Note that for the case with $\gamma=1$, the SSDGU returns the message having the ML codeword unless the computational limit is surpassed. Accordingly, it can be seen that $D_{\text{CFE}}^{\text{(G)}}[s,d]$ coincides with Gallager's bound for the case with $\gamma=1$ and $s(1)=k$, i.e., pure random codes.     
\section{Numerical Simulations}
\label{sec:Numerical_Simulations}
In this part, we will evaluate $D_{\text{E}}^{\text{(G)}}[s,d]$ considering $(128,64)$-random tree codes. To approximate the true value of $D_{\text{E}}^{\text{(G)}}[s,d]$, the parameters $\varrho$ and $\rho$ are optimized over a $10$-point uniform quantization of $[0,1]$.

\begin{table}[h]
\begin{center}
  \begin{tabular}{ | l | l | l | l | l |}
    \hline
                                     & $L=10^{9}$   & $L=10^{10}$   & $L=10^{11}$ \\ \hline
    $D_{\text{E}}^{\text{(G)}}[s,d]$ & $3.6 \times 10^{-3}$ & $1.9 \times 10^{-3}$ & $1.3 \times 10^{-3}$\\ \hline
    $D_{\text{CLE}}^{\text{(G)}}[s,d]$ & $1.7 \times 10^{-3}$ & $0.4 \times 10^{-3}$ & $0.8 \times 10^{-4}$\\ \hline
    $D_{\text{CFE}}^{\text{(G)}}[s,d]$ & $2.0 \times 10^{-3}$ & $1.5 \times 10^{-3}$ & $1.2 \times 10^{-3}$\\
    \hline
  \end{tabular}
\end{center}
\caption{The evaulation of $D_{\text{E}}^{\text{(G)}}[s,d]$, $D_{\text{CLE}}^{\text{(G)}}[s,d]$ and $D_{\text{CFE}}^{\text{(G)}}[s,d]$ for various computational limit $L$ values considereing $(128,64)$-random tree codes under BSC with $p=0.03$ where $s$ is optimized using the SBP algorithm and $d$ is in the form \eqref{decodingcostwithdiscounts} where $\gamma=1$. }
\label{table:DEfor(128,64)underp003gamma1}
\end{table}

\begin{table}[h]
\begin{center}
  \begin{tabular}{ | l | l | l | l | l |}
    \hline
                                     & $L=10^{9}$   & $L=10^{10}$   & $L=10^{11}$ \\ \hline
    $D_{\text{E}}^{\text{(G)}}[s,d]$ & $2.7 \times 10^{-3}$ & $1.7 \times 10^{-3}$ & $1.5 \times 10^{-3}$\\ \hline
    $D_{\text{CLE}}^{\text{(G)}}[s,d]$ & $0.6 \times 10^{-3}$ & $0.2 \times 10^{-3}$ & $0.7 \times 10^{-4}$\\ \hline
    $D_{\text{CFE}}^{\text{(G)}}[s,d]$ & $2.1 \times 10^{-3}$ & $1.5 \times 10^{-5}$ & $1.4 \times 10^{-3}$\\
    \hline
  \end{tabular}
\end{center}
\caption{The evaulation of $D_{\text{E}}^{\text{(G)}}[s,d]$, $D_{\text{CLE}}^{\text{(G)}}[s,d]$ and $D_{\text{CFE}}^{\text{(G)}}[s,d]$ for various computational limit $L$ values considereing $(128,64)$-random tree codes under BSC with $p=0.03$ where $s$ is optimized using the SBP algorithm and $d$ is in the form \eqref{decodingcostwithdiscounts} where $\gamma=0.9992$. }
\label{table:DEfor(128,64)underp003gamma09992}
\end{table}

\begin{table}[h]
\begin{center}
  \begin{tabular}{ | l | l | l | l | l |}
    \hline
                                     & $L=10^{9}$   & $L=10^{10}$   & $L=10^{11}$ \\ \hline
    $D_{\text{E}}^{\text{(G)}}[s,d]$ & $7.2 \times 10^{-5}$ & $2.6 \times 10^{-5}$ & $9.4 \times 10^{-6}$\\ \hline
    $D_{\text{CLE}}^{\text{(G)}}[s,d]$ & $3.7 \times 10^{-5}$ & $1.1 \times 10^{-5}$ & $2.8 \times 10^{-6}$\\ \hline
    $D_{\text{CFE}}^{\text{(G)}}[s,d]$ & $3.5 \times 10^{-5}$ & $1.4 \times 10^{-5}$ & $6.6 \times 10^{-6}$\\
    \hline
  \end{tabular}
\end{center}
\caption{The evaulation of $D_{\text{E}}^{\text{(G)}}[s,d]$, $D_{\text{CLE}}^{\text{(G)}}[s,d]$ and $D_{\text{CFE}}^{\text{(G)}}[s,d]$ for various computational limit $L$ values considereing $(128,64)$-random tree codes under BSC with $p=0.02$ where $s$ is optimized using the SBP algorithm and $d$ is in the form \eqref{decodingcostwithdiscounts} where $\gamma=1$. }
\label{table:DEfor(128,64)underp002gamma1}
\end{table}

\begin{table}[h]
\begin{center}
  \begin{tabular}{ | l | l | l | l | l |}
    \hline
                                     & $L=10^{9}$   & $L=10^{10}$   & $L=10^{11}$ \\ \hline
    $D_{\text{E}}^{\text{(G)}}[s,d]$ & $4.6 \times 10^{-5}$ & $1.7 \times 10^{-5}$ & $7.5 \times 10^{-6}$\\ \hline
    $D_{\text{CLE}}^{\text{(G)}}[s,d]$ & $2.2 \times 10^{-5}$ & $0.6 \times 10^{-5}$ & $1.8 \times 10^{-6}$\\ \hline
    $D_{\text{CFE}}^{\text{(G)}}[s,d]$ & $2.4 \times 10^{-5}$ & $1.1 \times 10^{-5}$ & $5.7 \times 10^{-6}$\\
    \hline
  \end{tabular}
\end{center}
\caption{The evaulation of $D_{\text{E}}^{\text{(G)}}[s,d]$, $D_{\text{CLE}}^{\text{(G)}}[s,d]$ and $D_{\text{CFE}}^{\text{(G)}}[s,d]$ for various computational limit $L$ values considereing $(128,64)$-random tree codes under BSC with $p=0.02$ where $s$ is optimized using the SBP algorithm and $d$ is in the form \eqref{decodingcostwithdiscounts} where $\gamma=0.9992$. }
\label{table:DEfor(128,64)underp002gamma09992}
\end{table}

In Table \ref{table:DEfor(128,64)underp003gamma1} and Table \ref{table:DEfor(128,64)underp003gamma09992}, we evaluate $D_{\text{E}}^{\text{(G)}}[s,d]$, $D_{\text{CLE}}^{\text{(G)}}[s,d]$ and $D_{\text{CFE}}^{\text{(G)}}[s,d]$ under BSC with $p=0.03$ setting where $s$ is optimized using the SBP algorithm and $d$ is in the form \eqref{decodingcostwithdiscounts} where $\gamma=1$ and $\gamma=0.9992$, respectively. For BSC with $p=0.03$, $D_{\text{CFE}}^{\text{(G)}}[s,d]$ can be minimized to $1.1 \times 10^{-3}$ with  pure random codes, i.e., $s(1)=64$ case, and setting $\gamma=1$. 

In Table \ref{table:DEfor(128,64)underp002gamma1} and Table \ref{table:DEfor(128,64)underp002gamma09992}, we evaluate $D_{\text{E}}^{\text{(G)}}[s,d]$, $D_{\text{CLE}}^{\text{(G)}}[s,d]$ and $D_{\text{CFE}}^{\text{(G)}}[s,d]$ under BSC with $p=0.02$ setting where $s$ is optimized using the SBP algorithm and $d$ is in the form \eqref{decodingcostwithdiscounts} where $\gamma=1$ and $\gamma=0.9992$, respectively. For BSC with $p=0.02$, $D_{\text{CFE}}^{\text{(G)}}[s,d]$ can be minimized to $2.9 \times 10^{-6}$ with  pure random codes, i.e., $s(1)=64$ case, and setting $\gamma=1$. 

Note that one can obtain an upper bound on the expected number of node checks, i.e., $\mathbb{E}[N_{c}]$ , as $\mathbb{E}[N_{c}] \leq  D_{\text{CLE}}^{\text{(G)}}[s,d]L$ (see Theorem \ref{expectedNcboundDCLEML}). For example, the expected number of node checks is lower that $2.2 \times 10^{4}$ where $L=10^{9}$ with the setting in Table \ref{table:DEfor(128,64)underp002gamma09992}.  

Comparing $\gamma=1$ and $\gamma=0.9992$ cases, it can be observed that the setting with $\gamma=0.9992$ provides computational efficiency and lower error bounds particularly when $L$ is relatively low. However, for higher values of $L$, the setting with $\gamma=1$ closes the gap and can provide even lower error bounds. The reason is that as $\gamma=1$ is optimal (corresponding to ML decoding) for $D_{\text{CFE}}^{\text{(G)}}[s,d]$, $D_{\text{E}}^{\text{(G)}}[s,d]$ for the setting with $\gamma=1$ converges to the minimum error bound for $D_{\text{CFE}}^{\text{(G)}}[s,d]$, i.e., Gallager's bound, as $L$ grows large.      

These numerical examples show that CORT codes can approach the ML decoding performance of pure random codes under reasonable computational costs for the decoding.  

\section{Conclusion}
\label{sec:Conclusion}
We introduced an achievability bound on the frame error rate of random tree codes considering a sequential decoding with a hard computational limit. We proposed the design of practical codes based on the optimization of branching structure of the random tree codes and the decoding measure with respect to this achievability bound. 

We suggested an algorithm for optimizing the branching structure, i.e., the SBP algorithm, however we did not suggest such a method for optimizing the decoding measure. The varitional methods might be applied for the optimization of the decoding measure.

Numerical examples show that a heustically optimized and relaxed version of the achievability bound is effective. Yet, one can consider the tighter version of the achievability bound and other optimization methods.

In terms of hardware costs, one drawback of the SSDGU algorithm is that it is stack based and assumes a stack memory as large as the computational limit parameter. However, one can consider a low memory decoding algorithm based on the SSDGU algorithm. 
     
\bibliographystyle{ieeetr}
\bibliography{cortcodes}
\end{document}